\newif\ifarxiv
\newif\ifimages
\newtheorem{theorem}{Theorem}[section]
\newtheorem{definition}[theorem]{Definition}
\newtheorem{remark}[theorem]{Remark}
\newtheorem{proposition}[theorem]{Proposition}
\newtheorem{alg}{Algorithm}
\title{\LARGE \bf
Model Predictive Control of Autonomous Mobility-on-Demand Systems
}
\author{Rick Zhang, Federico Rossi, and Marco Pavone
\thanks{This research was supported by National Science Foundation under CAREER Award CMMI-1454737 and by the Dr. Cleve B. Moler Stanford Graduate Fellowship.}
\thanks{Rick Zhang, Federico Rossi, and Marco Pavone are with the Department of Aeronautics \& Astronautics, Stanford University, Stanford, CA 94305 {\tt\small \{rickz, frossi2, pavone\}@stanford.edu}}%
}
\begin{document}

\maketitle 
\thispagestyle{empty}
\pagestyle{empty}

\begin{abstract}
In this paper we present a model predictive control (MPC) approach to optimize vehicle scheduling and routing in an autonomous mobility-on-demand (AMoD) system. In AMoD systems, robotic, self-driving vehicles transport customers within an urban environment and are coordinated to optimize service throughout the entire network. Specifically, we first propose a novel discrete-time model of an AMoD system and we show that this  formulation allows the easy integration of a number of real-world constraints, e.g., electric vehicle charging constraints. Second, leveraging our model, we design a model predictive control algorithm for the optimal coordination of an AMoD system and prove its stability in the sense of Lyapunov. At each optimization step, the  vehicle scheduling and routing problem is solved  as a mixed integer linear program (MILP) where the decision variables are binary variables representing whether a vehicle will 1) wait at a station, 2) service a customer, or 3) rebalance to another station.  Finally, by using real-world data, we show that the MPC algorithm can be run in real-time for moderately-sized systems and outperforms previous control strategies for AMoD systems.
\end{abstract}

\section{Introduction}
Current mobility trends, anchored by the continued growth of privately owned automobiles, are creating high levels of air pollution and traffic congestion, especially in densely populated cities with limited space for road infrastructure and parking. Urban transportation in the US account for over half of the total oil consumption \cite{EIA:13} while producing 20\% of total carbon dioxide emissions \cite{UN:13}. With world urban population projected to increase by 2.5 billion by 2050 \cite{UN:14}, current urban transportation trends are widely viewed as unsustainable for the future. 

The eventual solution to this problem will likely involve the convergence of several key emerging technologies. First, one-way carsharing has emerged as a promising solution to increase vehicle utilization and promote sustainable urban land use, and the rise in mobile technology has enabled on-demand taxi services like Uber \cite{Uber:15}.
Second, electric vehicle technology has the potential to drastically reduce emissions and dependence on oil, and promote the generation of renewable energy. Finally, the advancement in autonomous driving technology promises to further increase convenience (through on-demand service), safety, and mobility for people unable or unwilling to drive. These emerging pieces lead to a transformational technology known as autonomous mobility-on-demand (AMoD) \cite{MP-SLS-EF-DR:12,WJM-CEBB-LDB:10,RZ-MP:15a}, whereby shared driverless electric cars provide personal on-demand transportation for customers. Its many potential benefits have led a number of companies to aggressively pursue AMoD technology \cite{Google:14,Ligier:14}. However, the optimal coordination of these robotic electric vehicles in a transportation network remains a challenge. 

\emph{Statement of contributions:} The objective of this paper is to design a model predictive control (MPC) approach to optimize vehicle scheduling and routing in an AMoD system. Model predictive control (also known as receding horizon control) is a control technique whereby an open-loop optimization problem is solved at each time step to yield a sequence of control actions up to a fixed horizon, and the first control action is executed. Due to its iterative nature, MPC can achieve closed-loop performance, is robust to model errors, and is well suited for complex, constrained systems. Previous work on mobility-on-demand (MoD) and AMoD systems has focused on design and fleet sizing, where the steady state of the system is characterized using a fluidic model \cite{MP-SLS-EF-DR:12,SLS-MP-MS-EF-DR:13}, a queueing network model \cite{RZ-MP:15a,KS-KT-RZ-EF-DM-MP:14,RZ-MP:15} or a Markov model \cite{MV-JA-DR:12}. Real-time control strategies devised in these works are heuristics based on the steady-state model and are exclusively concerned with the ``rebalancing'' problem (where the robotic vehicles redistribute themselves to align with asymmetric customer demand). In particular, these models do not allow easy integration of real-world constraints such as electric vehicle charging and parking capacities, which limits their practical application. Control algorithms for AMoD are also related to dispatch algorithms for taxis \cite{KTS-NHD-DHL:10,AA-SA-IR:09}, but these algorithms typically cannot enforce a vehicle distribution since taxi drivers make the final decisions. 

The key feature of our approach is that it is amenable to real-time optimization and receding horizon control while having the flexibility to account for many real-world phenomena such as battery charging constraints, customer priorities, and parking space limitations. Specifically, the contribution of this paper is threefold. First, we propose a novel discrete-time model of an AMoD system and we show that this formulation allows the easy integration of a number of real-world constraints, with a special focus on electric vehicle charging constraints. Second, leveraging our model, we design a model predictive control algorithm for the optimal coordination of an AMoD system and prove its stability in the sense of Lyapunov. Finally,  by using real-world data, we show that the MPC algorithm can be run in real-time for moderately-sized systems and compare its performance to four other AMoD control algorithms and taxi dispatch algorithms in the literature. We show that the MPC algorithm not only outperforms other algorithms in terms of customer wait times, but can also be used as an optimal performance benchmark to evaluate other dispatch algorithms. 

Our work draws inspiration from time-space network models such as \cite{NT-TM-LS:06} for optimizing bus routes and \cite{DJ-GHAC-CB:14,AGHK-RLC-QM-CHF:09} for vehicle redistribution policies in carsharing systems. Receding horizon (or model predictive) control techniques have been used in the context of transportation systems \cite{SL-BDS-YX-HH:12,RRN-BDS-JH:08}. The key difference between our approach and these works, besides the modeling differences, is that we provide a rigorous proof of stability within an MPC framework. In this perspective, our approach is related to the one used in \cite{CD-FB-DO-ea:12} for capacity maximization in battery networks and \cite{EF-LM-TP-ea:08,LM:05} for cooperative multi-agent systems.

\emph{Organization:}  The remainder of this paper is organized as follows: In Section \ref{sec:model} we present our AMoD model and discuss the inclusion of operational constraints, in particular battery charging. In Section \ref{sec:problem} we formulate the problem of regulating an AMoD system. In Section \ref{sec:mpc} we present two MPC algorithms to control the AMoD system and prove their stability. Simulation studies are presented in Section \ref{sec:results} to assess the performance of the MPC algorithms and characterize the effect of charging constraints on system throughput. In Section \ref{sec:extensions} we discuss the inclusion of additional operational constraints, e.g., customers' priorities. Lastly, in Section \ref{sec:conc} we draw our conclusions and present directions for future research. 

\section{Model} \label{sec:model}
In this section, we first introduce a linear discrete-time model of an AMoD system with similar assumptions as those proposed in the fluidic model in \cite{MP-SLS-EF-DR:12}. We then expand the model by introducing charging constraints associated with using electric vehicles. Such constraints are both of practical and theoretical interests. In particular, as we will see, with the addition of charging constraints (which are piecewise-linear), the system is no longer strictly a linear system. However, optimization in the form of a mixed-integer linear program (MILP) can still be performed with these additional constraints.  The inclusion of additional operational constraints is discussed in Section \ref{sec:extensions}.

\subsection{Linear AMoD model} \label{sec:linear}
We consider a discrete-time system with $N$ stations and $m$ single-occupancy vehicles. Let $\mathcal{N}$ represent the set of stations, $|\mathcal{N}| = N$, and let $\mathcal{V}$ represent the set of vehicles, where $|\mathcal{V}| = m$. 
At each time step, customers arrive at each station and wait for vehicles to transport them to their desired destinations. In our model, customers may not be serviced on a first-come-first-serve basis, as the system determines the best ordering for serving the customers. While this idea may at first seem to be at odds with the notion of ``fairness," it is a standard strategy for ride-sharing services like SuperShuttle \cite{SuperShuttle:15}, where it is important to cluster customers traveling in the same direction. This strategy is also more natural in an equivalent system where ``stations'' are geographical regions (rather than physical infrastructure) and customers request transportation via a mobile app. Furthermore, we will show in Section \ref{sec:extensions} that customer priority can be easily integrated into the model.


Before defining the system states, we first define the ``control'' variables of the AMoD system. A control decision is made at each time step for each vehicle parked at a station. The two possible actions each vehicle can take are 1) transport a customer from one station to another, and 2) rebalance the system by driving itself from one station to another (this is a key advantage of robotic vehicles). We can encode these actions using binary variables. Let $v^k_{ij}(t) = 1$ if vehicle $k$ is transporting a customer from station $i$ to station $j$ beginning at time $t$, and arriving at station $j$ at time $t+t_{ij}$. The travel time $t_{ij}$ is assumed to be deterministic and known. Similarly, let $w^k_{ij}(t) = 1$ if vehicle $k$ is rebalancing from station $i$ to station $j$ beginning at time $t$ and arriving at time $t+t_{ij}$. 

Denote by $d_{ij}(t)$ the number of customers waiting at station $i$ at the start of time period $t$ whose destination is station $j$. Denote by $c_{ij}(t)$ the number of customers that arrive at station $i$ at time $t$ heading to station $j$. The dynamics of $d_{ij}(t)$ are propagated as follows:

\vspace{-1mm}
{\small\begin{equation}
d_{ij}(t+1) = d_{ij}(t) + c_{ij}(t) - \sum_{k \in \mathcal{V}} v^k_{ij}(t),
\label{eq:dij}
\end{equation}}
\vspace{-2mm}

where the last term represents the number of passenger-carrying vehicles leaving station $i$ at time $t$. Note that $d_{ij} \geq 0$ for all $i, j \in \mathcal{N}$ and for all time. This means that if $d_{ij}(t) = 0$ and $c_{ij}(t) = 0$, then $v^k_{ij}(t) = 0$ for all $k \in \mathcal{V}$. The number of waiting customers plays a significant role in characterizing the performance of the system, hence $d_{ij}(t)$ is modeled as a state variable.

When a vehicle is on the road, it is necessary to keep track of how long it will be traveling before it reaches its destination. We represent this by the binary variables $^{T_i}p^k_{i}(t) \in \{0,1\}$, where $k \in \mathcal{V}$ and $T_i \in \{0, \max_j\{t_{ji}\}-1\}$ is the number of time steps remaining until the vehicle reaches station $i$, $i \in \mathcal{N}$. Suppose vehicle $k$ leaves station $j$ destined for station $i$ at time $t$; then $^{t_{ji}-1}p^k_i(t+1)$ is set to one at the next time step to indicate that the vehicle is $t_{ji}-1$ time steps from station $i$. In the subsequent time step, $^{t_{ji}-2}p^k_i(t+2)$ is set to one to indicate the progress of the vehicle along its path. Eventually $^{0}p^k_i(t+t_{ji})$ is set to one to signal that the vehicle has arrived at station $i$. The propagation of $^{T_i}p^k_i(t)$ is formally defined as follows:

\vspace{-3.5mm}
{\small
\begin{align}
&^{T_i}p^k_i(t+1) \notag \\
&=\begin{cases}
^{T_i+1}p^k_i(t) + \sum\limits_{j: t_{ji}-1=T_i} (v^k_{ji}(t) + w^k_{ji}(t)) & \text{if } T_i < T_{\text{max},i} \\
\sum\limits_{j: t_{ji}-1=T_{\text{max},i}} (v^k_{ji}(t) + w^k_{ji}(t)) & \text{if } T_i = T_{\text{max},i},
\end{cases}
\label{eq:tpki}
\end{align}
}

\vspace{-3mm}
where $T_{\text{max},i} = \text{max}_j t_{ji}-1$. Note that since each vehicle can only be in one place at one time, the $^{T_i}p^k_i$'s are subject to the constraints

\vspace{-3.5mm}
{\small\begin{align}
\sum_{i \in \mathcal{N}} \sum_{T_i}\ ^{T_i}p^k_i(t) \leq 1.
\label{eq:pconstraint}
\end{align}
}
Constraint \ref{eq:pconstraint} will not be enforced explicitly, and instead will result from constraints on the control variables. The dimension of $^{T_i}p^k_i$ depends on the travel time from each station to every other station. For each vehicle and each station $i$ there are $T_{\text{max},i}$ such variables, so the total dimension of $^{T_i}p^k_i$ is $\mathcal{D} = |\mathcal{V}| \sum_{i \in \mathcal{N}} T_{\text{max},i}$.

Finally, let $u^k_i(t)$ be the state variable associated with waiting at a station, that is $u^k_i(t) = 1$ if vehicle $k$ waited at station $i$ from time $t-1$ to time $t$. The dynamics of $u^k_i(t)$ are modeled as follows
\begin{align}
u^k_i(t+1) = u^k_i(t) + ^0p^k_i(t) - \sum_{j \in \mathcal{N}} (v^k_{ij}(t) + w^k_{ij}(t)).
\vspace{-1.5em}
\label{eq:uit}
\end{align}
Equation \eqref{eq:uit} ensures that 1) a vehicle can only perform an action (via $v^k_{ij}$ or $w^k_{ij}$) if it is at a station, and 2) if a vehicle does not perform an action, it waits at a station. In other words, each vehicle must complete a task before starting another. However, a vehicle cannot perform an action if it is already on the road, thus a constraint is needed between $u^k_i(t)$ and $^{T_i}p^k_i(t)$ which ensures that a vehicle is either waiting at a station or traveling. This is formalized as

\vspace{-2mm}
{\small
\begin{align}
\sum_{i \in \mathcal{N}} u^k_i(t) + \sum_{i \in \mathcal{N}, T_i}\, ^{T_i}p^k_i(t) = 1.
\label{eq:construp}
\end{align}
}
\vspace{-2mm}

Finally, the following constraint between $u^k_i$, $v^k_{ij}$, and $w^k_{ij}$ ensures that a vehicle only performs one task at a time

\vspace{-2mm}
{\small
\begin{align}
\sum_{i \in \mathcal{N}} \Big(u^k_i(t+1) + \sum_{j \in \mathcal{N}} v^k_{ij}(t) + \sum_{j \in \mathcal{N}} w^k_{ij}(t)\Big) \leq 1,
\label{eq:constr1}
\end{align}
}
where the sum is zero when vehicle $k$ is traveling (i.e., $\sum_{i \in \mathcal{N}, T_i \neq 0}\;\allowbreak ^{T_i}p^k_i(t) = 1$). 

The variables $d_{ij}(t)$, $^{T_i}p^k_i(t)$, and $u^k_i(t)$ make up the state of the system, as they completely define the customer demand and the state of all vehicles. Let $x(t)$ be the state vector, that is, the column vector created by reshaping and concatenating $d_{ij}(t)$, $^{T_i}p^k_i(t)$, and $u^k_i(t)$. We define the set of feasible states by $\mathcal{X}$ where

\vspace{-2mm}
{\small\begin{align}
\mathcal{X} := \left\{x = [d_{ij}\; ^{T_i}p^k_i \; u^k_i]^\intercal \Bigg | 
\begin{array}{l}
d_{ij} \in (\mathbb{N} \cup \{0\})^{N^2}, \;  d_{ii} = 0 \\
^{T_i}p^k_i \in \{0,1\}^\mathcal{D},\;  ^{T_i}p^k_i \text{ satisfies } \eqref{eq:pconstraint} \\
u^k_i \in \{0,1\}^{N |\mathcal{V}|}, \; u^k_i \text{ satisfies } \eqref{eq:construp}
\end{array}
 \right \}
\label{eq:xset}
\end{align}}
\vspace{-3mm}

The variables $v^k_{ij}(t)$ and $w^k_{ij}(t)$ make up the control input of the system. Let $u(t)$ be the control vector, that is, the column vector created by concatenating $v^k_{ij}(t)$ and $w^k_{ij}(t)$. Any feasible control $v^k_{ij}$ which sends vehicles to transport customers cannot transport more customers than there are waiting, thus
\begin{align}
\sum_{k \in \mathcal{V}} v_{ij}^k(t) \leq d_{ij}(t) + c_{ij}(t).
\label{eq:constr4}
\end{align}
We collect our system constraints to form the set of feasible controls, $\mathcal{U}(t)$, where
\vspace{-3.5mm}

{\small\begin{align}
\mathcal{U}(t) := \left\{u = [v^k_{ij}\; w^k_{ij}]^\intercal \Bigg | 
\begin{array}{l}
v^k_{ij} \in \{0,1\}^{|\mathcal{V}|N^2}, v^k_{ii} = 0 \\
w^k_{ij} \in \{0,1\}^{|\mathcal{V}|N^2}, w^k_{ii} = 0 \\
u \text{ satisfies \eqref{eq:constr1} and \eqref{eq:constr4}}
\end{array}
 \right \} .
\label{eq:uset}
\end{align}}
Note that since \eqref{eq:constr1} and \eqref{eq:constr4} are time dependent, $\mathcal{U}(t)$ is time dependent. With this formulation, we have modeled an AMoD system (without battery charging or other operational constraints) as a linear system in the form of
\begin{equation}
\vspace{1mm}
x(t+1) = A x(t) + B u(t) + c(t),
\label{eq:linear}
\end{equation}
where $x(t) \in \mathcal{X}$, $u(t) \in \mathcal{U}(t)$, $c(t) = [c_{ij}(t)\;\; \mathbf{0}\;\; \mathbf{0}]^\intercal$, and $A$ and $B$ are the coefficient matrices associated with \eqref{eq:dij}, \eqref{eq:tpki}, and \eqref{eq:uit}. The vector $c(t)$ represents new customers that arrive every time step and constitutes  an  exogenous disturbance for the system.

A few comments are in order. First, one may wonder why information about whether a vehicle is rebalancing or ferrying a passenger is not encoded in the state vector. This is because we have assumed that as soon as a customer boards a vehicle, he/she has been serviced and the vehicle is identical to one that is rebalancing (traveling without a customer). The only thing that matters is the time at which the vehicle arrives at its destination. 
Second, we have assumed that each station has sufficient parking space for as many vehicles as needed. This may be indeed true if the stations are geographical regions and vehicles are loitering within the region while waiting for customers. However, limited parking spaces are a real concern especially if parking spaces serve as charging stations for electric vehicles. In Section \ref{sec:extensions} we discuss how this AMoD framework can easily be extended to include limited parking capacity. In the next section we outline additional considerations associated with electric vehicles, in particular charging constraints (additional operational constraints are discussed in Section \ref{sec:extensions}).

\subsection{Charging constraints}

For AMoD systems using electric vehicles, range is a major concern. To take into account the limited range of each vehicle, we define as an additional state variable the state of charge of each vehicle, $q^k(t) \in [0,1]$. A value $q^k(t) = 1$ means that the batteries are fully charged while $q^k(t) = 0$ means that the batteries are depleted. Vehicles' batteries discharge while driving, and can be charged at the stations while waiting for customers. The capacity of the batteries is limited, so once the batteries are full (i.e., $q^k(t) = 1$), charging stops. Each vehicle's charge evolves according to

\vspace{-2mm}
{\small
\begin{align}
q^k(t+1) = &\min\{q^k(t) + \alpha_c \sum_{i \in \mathcal{N}} u^k_i(t+1), 1\} \notag \\
&-\alpha_d \sum_{i \in \mathcal{N}, T_i}\ ^{T_i}p^k_i(t+1),
\label{eq:qk}
\end{align}
}
where $\alpha_c>0$ is the rate of charge at a charging station and $\alpha_d>0$ is the rate of discharge while driving.

The charge of the vehicle restricts its range, and in some scenarios, a vehicle may have to wait at its charging station to charge rather than to transport a waiting customer. The charging constraints ensure that each vehicle has enough charge to complete its trip:
\begin{align}
q^k(t) &\geq v^k_{ij}(t)\, \alpha_d\, t_{ij},\label{eq:ccv}\\
q^k(t) &\geq w^k_{ij}(t)\, \alpha_d\, t_{ij}. \label{eq:ccw}
\end{align}
Constraint \eqref{eq:ccv} ensures enough charge for a customer trip and \eqref{eq:ccw} ensures enough charge for a rebalancing trip. 

\subsection{Objectives}
The primary objective is to service all of the waiting customers as quickly as possible. A secondary goal is to ensure that rebalancing is done in an efficient manner and that vehicles do not rebalance when not necessary to avoid adding congestion on the road. Hence, for each time step $t$ we have the cost functions
\vspace{0.8mm}
\begin{equation}
J_x(x(t)) = \sum_{i,j \in \mathcal{N}} d_{ij}(t), \; \text{primary objective},
\label{eq:objx}
\end{equation} \vspace{-1mm}
\begin{equation}
J_u(u(t)) = \sum_{k \in \mathcal{V}} \sum_{i,j \in \mathcal{N}} t_{ij} w^k_{ij}(t), \; \text{secondary objective}.
\label{eq:obju}
\end{equation}


When charging constraints are considered, we may include the vehicles' final state of charge as an objective to maximize. This allows us to trade off short-term quality-of-service and long-term battery capacity. To this end, we define the cost function
\begin{equation}
J_c(x(t_{\text{hor}})) =  \sum_{k \in \mathcal{V}} q^k(t_\text{hor}).
\end{equation}

\section{Problem Formulation} \label{sec:problem}
The objective of this paper is to design model predictive control algorithms that are 1) provably stable in the sense of Lyapunov
 and 2) robust against the exogenous disturbance $c(t)$ (customer arrivals). 
 We can now rigorously formulate the first problem, namely the AMoD regulation problem:  
\begin{quote}
{\normalsize{\bf AMoD Regulation Problem (ARP)}: Assume $c_{ij}(t) = 0$ for all time $t > 0$. For each time $t$, select feasible control inputs $u(t) \in \mathcal{U}(t)$ such that as $t \rightarrow \infty$, $J_x(x(t)) \rightarrow 0$. 
}
\end{quote}
In the definition of the ARP it is assumed that the exogenous disturbance $c(t)$ is identically equal to zero for $t>0$, in other words no new customers arrive after time zero. Hence, the ARP captures the \emph{minimal} requirement that an initial set of customers is eventually transported to the respective destinations, under the assumption of zero future customer arrivals---hence the name ``regulation problem."

While the ARP can also be solved by straightforward algorithms such as nearest neighbor dispatch (presented in Section \ref{sec:results}), it is nevertheless critical to show that our MPC algorithm does not only offer good real-world performance but also guarantees analytical stability.
The second objective, robustness against the exogenous disturbance $c(t)$, is analyzed in simulation in Section \ref{sec:results}.

In the ARP, we note that $J_x(x(t)) \rightarrow 0$ will cause $J_u(u(t)) \rightarrow 0$ which implies that $u(t) \rightarrow \mathbf{0}$ by the definitions of $J_x(x(t))$ and $J_u(u(t))$. 

Before presenting our MPC algorithms, we show some important structural properties of our problem setup. We first show that given $x(t) \in \mathcal{X}$, and $u(t) \in \mathcal{U}(t)$, the state of the undisturbed system at the next time step, $x(t+1) = Ax(t) + Bu(t)$, automatically satisfies \eqref{eq:pconstraint} and \eqref{eq:construp}, and thus $x(t+1) \in \mathcal{X}$. This property ensures the \emph{persistent feasibility} of the MPC algorithms presented in Section \ref{sec:mpc}.
\ifarxiv
The proof of the following proposition can be found in the appendix.
\else
The proof of the following proposition can be found in \cite{RZ-FR-MP:16EV}.
\fi
\begin{proposition}[Feasible Sets] \label{prop:feasible}
Given $x \in \mathcal{X}$, $u \in \mathcal{U}(t)$, and $x^+$ given by $x^+ = Ax + Bu$, then $x^+ \in \mathcal{X}$. 
\end{proposition}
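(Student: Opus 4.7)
The plan is to verify component-wise that $x^+ = Ax + Bu$ satisfies each of the three blocks of conditions that define $\mathcal{X}$ in \eqref{eq:xset}: the integrality/non-negativity of $d_{ij}^+$, the binary nature of $^{T_i}p_i^{k,+}$ satisfying \eqref{eq:pconstraint}, and the binary nature of $u_i^{k,+}$ together with the coupling identity \eqref{eq:construp}. Since \eqref{eq:construp} (an equality between non-negative quantities summing to $1$) directly implies \eqref{eq:pconstraint}, the heart of the argument is really to establish \eqref{eq:construp} at time $t+1$ and to infer binarity from it.

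I would handle $d_{ij}^+$ first, as it is the easiest. By the dynamics \eqref{eq:dij} with $c(t)$ absorbed in the exogenous forcing (and therefore absent from $Ax+Bu$), $d_{ij}^+ = d_{ij}(t) - \sum_{k} v_{ij}^k(t)$; this is a non-negative integer by constraint \eqref{eq:constr4} together with $d_{ij}(t) \in \mathbb{N}\cup\{0\}$ and $v_{ij}^k(t) \in \{0,1\}$, while $d_{ii}^+ = 0$ follows from $v_{ii}^k = 0$ and $d_{ii}(t) = 0$. For $^{T_i}p_i^{k,+}$, equation \eqref{eq:tpki} expresses the new value as a sum of binary quantities, so non-negativity and integrality are automatic; only the upper bound of $1$ needs work, and that will follow from \eqref{eq:construp}.

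The main step is then a bookkeeping computation for each vehicle $k$: sum \eqref{eq:tpki} over all $i \in \mathcal{N}$ and all admissible $T_i$, producing
\begin{equation*}
\sum_{i,T_i}\, {}^{T_i}p_i^{k,+} \;=\; \sum_{i}\sum_{T_i \geq 1}\, {}^{T_i}p_i^{k}(t) \;+\; \sum_{i,j}\bigl(v_{ji}^k(t) + w_{ji}^k(t)\bigr),
\end{equation*}
where the shift accounts for the progress of already-travelling vehicles and the second term collects new departures. Summing \eqref{eq:uit} over $i$ yields
\begin{equation*}
\sum_{i} u_i^{k,+} \;=\; \sum_{i} u_i^k(t) \;+\; \sum_{i}\, {}^{0}p_i^{k}(t) \;-\; \sum_{i,j}\bigl(v_{ij}^k(t) + w_{ij}^k(t)\bigr).
\end{equation*}
Adding the two displays (and relabeling dummy indices in the control terms) collapses the departure/arrival flows and recovers $\sum_i u_i^k(t) + \sum_{i,T_i}\, {}^{T_i}p_i^k(t)$, which equals $1$ by the hypothesis $x \in \mathcal{X}$. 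This establishes \eqref{eq:construp} at $t+1$, and hence \eqref{eq:pconstraint} as well.

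The last loose end, and the step I expect to be the most delicate, is concluding that each $u_i^{k,+}$ actually lies in $\{0,1\}$. Since the $^{T_i}p_i^{k,+}$'s are non-negative integers and the aggregate $\sum_i u_i^{k,+} + \sum_{i,T_i}\, {}^{T_i}p_i^{k,+} = 1$, it suffices to argue that each $u_i^{k,+}$ is itself a non-negative integer, and then \eqref{eq:construp} forces the binary structure. Integrality is immediate from \eqref{eq:uit}, but non-negativity requires that no vehicle be dispatched from a station where it is not physically present; I would extract this from \eqref{eq:constr1} combined with the binarity of $u_i^k(t)$ and $^{0}p_i^k(t)$ at time $t$, noting that the total ``outflow'' from station $i$ plus the new $u_i^{k,+}$ cannot exceed the ``inflow'' $u_i^k(t) + {}^0p_i^k(t)$ without violating the aggregate bound. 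Once $u_i^{k,+} \geq 0$ is in hand, the equality \eqref{eq:construp} together with the non-negative-integer structure forces exactly one element of the combined vector to be $1$ and the rest $0$, completing the verification that $x^+ \in \mathcal{X}$.
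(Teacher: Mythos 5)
Your overall decomposition matches the paper's: the treatment of $d_{ij}^+$ is identical, and your central computation --- summing \eqref{eq:tpki} over $(i,T_i)$ and \eqref{eq:uit} over $i$, adding, and watching the departure terms $\sum_{i,j}(v^k_{ij}+w^k_{ij})$ cancel so that $\sum_i u^{k+}_i + \sum_{i,T_i}{}^{T_i}p^{k+}_i$ equals its value at time $t$, namely $1$ --- is exactly the paper's concluding display. The reorganization you propose (deduce \eqref{eq:pconstraint} and the binarity of both ${}^{T_i}p^{k+}_i$ and $u^{k+}_i$ from the preserved identity \eqref{eq:construp} plus non-negativity of every summand) is arguably cleaner than the paper's separate case analysis on $T_i$, and it correctly isolates the only nontrivial ingredient: $u^{k+}_i\geq 0$.

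However, the way you propose to obtain $u^{k+}_i\geq 0$ does not close. Substituting \eqref{eq:uit} into the aggregate bound \eqref{eq:constr1} collapses it to $\sum_i\bigl(u^k_i(t)+{}^0p^k_i(t)\bigr)\leq 1$, which holds automatically by \eqref{eq:construp}; the aggregate inequality therefore carries no per-station information. Concretely, a control with $v^k_{i'j}=1$ at a station $i'$ the vehicle does not occupy produces $u^{k+}_{i'}=-1$, yet still satisfies \eqref{eq:constr1}, because the $-1$ hidden inside $u^k_{i'}(t+1)$ offsets the $+1$ departure term in the same sum. What actually rules this out is reading $u^k_i(t+1)$ in \eqref{eq:constr1} as a binary decision variable of the program (equivalently, invoking the constraint $x(t+1)\in\mathcal{X}$ imposed in Algorithm \ref{alg:mpc1}), which gives $u^k_i(t+1)\geq 0$ station by station; combined with \eqref{eq:construp} and \eqref{eq:pconstraint} at time $t$, which say the vehicle occupies at most one station, equation \eqref{eq:uit} then forces all departures to emanate from that single occupied station. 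This is precisely the role of the paper's case analysis for $T_i<T_{\text{max},i}$, where $\sum_j(v^k_{ij}+w^k_{ij})=0$ is concluded from $u^k_i=0$ and ${}^0p^k_i=0$. Make this per-station non-negativity explicit and the rest of your argument goes through.
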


With this result we can be sure that the reachable space of the system is feasible. 

\section{Model Predictive Control of AMoD} \label{sec:mpc}
In this section we present two MPC algorithms to optimize vehicle scheduling and routing in an AMoD system. Specifically, the first MPC algorithm addresses the case without charging constraints (Section \ref{subsec:no_charge}), while the second MPC algorithm allows the inclusion of charging constraints (Section \ref{subsec:yes_charge}). We prove that both algorithms solve the ARP (our technical approach is to prove asymptotic stability in the sense of Lyapunov). Note that in general asymptotic Lyapunov stability is a stronger result than simply proving $J_x(x(t))\to 0$.
However, due to the boundedness of the number of customers ($d_{ij}(t)$), asymptotic Lyapunov stability coincides with solving the ARP in this case. We remark that proving asymptotic stability in the sense of Lyapunov does not only guarantee that the number of passengers will decrease to zero (hence, wait times will not grow unbounded) but also implies that, if initial conditions are small (i.e. few passengers are requesting service), wait times will also be small. We numerically characterize the performance of the MPC algorithms in Section \ref{sec:results} (in particular, we study their ability to deal with a continuous stream of arriving customers).

\subsection{MPC without charging constraints}\label{subsec:no_charge}

In this section we present the MPC algorithm for solving the AMoD regulation problem without charging constraints. In an MPC algorithm, an optimization problem is solved at each time instant giving a sequence of control actions up to a time horizon $t_\text{hor}$. The first step of the control sequence is implemented and the system is re-optimized at the next time instant. Let $u(t+k)_{|t}$ be the control action at time $t+k$, solved at time $t$, where $k \in \{0, t_{\text{hor}}-1\}$. 

\begin{alg}[MPC without charging constraints]\label{alg:mpc1}
Given \allowbreak$x(t) \in~\mathcal{X}$, at each time instant $t \in \mathbb{N}$ the controls $u(t)_{|t}, u(t+1)_{|t},\ldots, u(t+t_{\text{hor}}-1)_{|t}$ are obtained by solving the optimization problem

\vspace{-2mm}
{\small \begin{align}
\underset{u(t),\ldots,u(t+t_{\text{hor}}-1)}{\text{minimize}}  & \sum_{\tau = t}^{t+t_{\text{hor}}-1} 
J_x(x(\tau+1)) + \rho_1 J_u(u(\tau)) \notag \\
 \quad \text{subject to} \quad\;\; & x(\tau+1) = Ax(\tau) + Bu(\tau) \notag \\
 & x(\tau+1) \in \mathcal{X} \notag \\
 & u(\tau) \in \mathcal{U}(\tau) \notag \\
 & \tau = t,\ldots,t+t_{\text{hor}}-1. \notag
\end{align}
}
where $\rho_1 > 0$ and $J_x(x(\tau))$ and $J_u(u(\tau))$ are given by \eqref{eq:objx} and \eqref{eq:obju}, respectively. Implement $u(t)_{|t}$ and repeat the optimization at the next time instant.
\end{alg}
\begin{remark}
The purpose of $\rho_1 J_u(u(\tau))$ in the objective is to avoid unnecessary vehicle rebalancing. However, since not enough rebalancing may result in customers not receiving service, this term is secondary to the primary objective of servicing customers so $\rho_1$ should be set to a small value.
\end{remark}

We are now ready to present the main result of this section, which shows that Algorithm \ref{alg:mpc1} solves the ARP problem.
\begin{theorem}[Asymptotic stability of Algorithm 1]\label{thm:mpc1}
Suppose $t_{\text{hor}} \geq 2\max_{i,j \in \mathcal{N}} t_{ij}$. Then Algorithm \ref{alg:mpc1} solves the AMoD regulation problem.
\end{theorem}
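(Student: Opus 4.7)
My plan is to establish asymptotic stability in the sense of Lyapunov by using the optimal MPC value $V^*(x(t))$ as the Lyapunov function, combining a standard shift-and-pad suboptimal-trajectory argument with a strict-progress lemma driven by the horizon bound $t_{\text{hor}} \geq 2\max_{i,j} t_{ij}$.

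First I would check persistent feasibility: Proposition~\ref{prop:feasible} ensures $x(t+1) \in \mathcal{X}$ whenever $x(t) \in \mathcal{X}$ and $u(t) \in \mathcal{U}(t)$, and the zero control lies in $\mathcal{U}(t)$ for every $t$, so Algorithm~\ref{alg:mpc1} has a feasible minimizer at every time step. Next, I would derive a Lyapunov inequality via the shift-and-pad trick: given the optimal MPC sequence at time $t$, I form a feasible suboptimal sequence at time $t+1$ by shifting one step and padding with $\bar u(t+t_{\text{hor}}) = \mathbf{0}$. Because the zero control performs no dispatches and incurs no rebalance cost, and because the ARP hypothesis $c \equiv 0$ makes $J_x$ monotonically non-increasing along any trajectory (each dispatch reduces $J_x$ by one, no new customers arrive), comparing the cost of the shifted-and-padded sequence against $V^*(x(t))$ yields
\[
V^*(x(t+1)) \leq V^*(x(t)) - \rho_1 J_u(u(t)_{|t}) - \sum_{\tau=t+1}^{t+t_{\text{hor}}-1} N_v(\tau)_{|t},
\]
where $N_v(\tau)_{|t} := \sum_{k,i,j} v^k_{ij}(\tau)_{|t}$ is the number of customer dispatches scheduled at step $\tau$ under the time-$t$ plan. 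Thus $V^*$ is monotone non-increasing; telescoping shows that the first-step rebalance cost, the in-horizon future dispatches, and (via $N_v(t)_{|t} = J_x(x(t)) - J_x(x(t+1))$) the first-step dispatches are all summable and hence eventually zero.

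To conclude $J_x(x(t)) \to 0$, I would argue by contradiction. Suppose $J_\infty := \lim_t J_x(x(t)) > 0$. For $t$ large enough, the MPC plan contains no dispatches anywhere in its horizon and no first-step rebalance, and by a short exchange argument the plan must therefore be identically $\mathbf{0}$ over its entire horizon (any rebalance without a subsequent in-horizon dispatch is pure waste), giving plan cost $t_{\text{hor}} J_\infty$. Within $\max_{i,j} t_{ij}$ additional steps every vehicle is parked at a station. I would then exhibit a competing feasible plan that rebalances one parked vehicle to a station with a waiting customer (in at most $t_r \leq \max_{i,j} t_{ij}$ steps) and dispatches it on arrival; since $t_{\text{hor}} \geq 2\max_{i,j} t_{ij}$, the dispatch registers inside the horizon and the competing plan has cost at most $t_{\text{hor}} J_\infty - (t_{\text{hor}} - t_r) + \rho_1 t_r$, which is strictly less than $t_{\text{hor}} J_\infty$ once $\rho_1$ is chosen small enough relative to the horizon (as hinted in the remark after the algorithm). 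This contradicts optimality of the inert plan, forcing $J_\infty = 0$ and solving the ARP.

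The principal obstacle is this final strict-progress step: the shift argument by itself only yields non-strict Lyapunov decrease, so strict decrease must be extracted from the problem structure through the competing-plan construction. The horizon condition $t_{\text{hor}} \geq 2\max_{i,j} t_{ij}$ is precisely what allows one complete rebalance-and-dispatch cycle to fit inside a single planning window, which prevents MPC from getting stuck in a permanent do-nothing policy while customers are still waiting. Combined with the integer-valued, bounded, and monotone nature of $J_x$, this forces $J_x$ to reach zero in finite time, establishing asymptotic Lyapunov stability.
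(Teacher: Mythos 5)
Your argument is sound and reaches the same conclusion by a genuinely different route. The paper works on the reduced state $\tilde{x}=d_{ij}$, rescales time so that one macro-step equals $t_{\text{hor}}$ original steps, and applies a set-valued Lyapunov theorem from \cite{LM:05} with $W(\tilde{x})$ the set of states reachable in one macro-step without increasing $J_x$ and $\mu(W(\tilde x))=\min_{z\in W(\tilde x)}J_x(z)$; strict decrease of $\mu$ is obtained from the same structural fact you use, namely that $t_{\text{hor}}\ge 2\max_{i,j}t_{ij}$ leaves room for one arrival-plus-rebalance-plus-dispatch cycle inside a single window. You instead take the optimal MPC cost $V^*$ as a scalar Lyapunov function, derive non-increase by the standard shift-and-pad comparison (valid here because $c\equiv 0$ makes the predicted and realized states coincide and Proposition \ref{prop:feasible} plus feasibility of the zero control give persistent feasibility), extract summability of the first-step rebalance cost and of all in-horizon dispatches, and then rule out a permanently inert optimal plan by exhibiting a strictly cheaper rebalance-and-dispatch competitor. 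Your route is more elementary and self-contained (no time-rescaling, no set-valued machinery), and it actually makes explicit a step the paper's proof leaves informal: the paper asserts that some $v^k_{ij}$ "will be nonzero" within the horizon, but this is a claim about what the optimizer chooses, not merely about what is feasible, and it holds only when the rebalancing penalty does not dominate the waiting-time savings. Your competing-plan inequality quantifies this as $\rho_1 t_{ii'} < t_{\text{hor}}-t_{ii'}$, i.e.\ $\rho_1<1$ under the stated horizon bound; since the theorem as written places no upper bound on $\rho_1$, this is a hypothesis that both proofs implicitly require (the paper only "hints" at it in the remark following Algorithm 1), and you are right to surface it rather than hide it. The one cosmetic difference is that the paper's framework also delivers uniform Lyapunov stability (small initial demand implies small excursions) as part of the cited theorem, whereas your argument establishes convergence $J_x\to 0$ directly; for the ARP as stated, and given that $J_x$ is integer-valued and non-increasing along the closed loop, the two conclusions coincide.
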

The proof of Theorem \ref{thm:mpc1}  
\ifarxiv
is reported in
\ifimages 
\else
the 
\fi
Appendix \ref{apx:proofs}.
\else
can be found in the Appendix of \cite{RZ-FR-MP:16EV}.
\fi
The key idea of the proof is to show that at least one customer is serviced every $t_\text{hor}$ time steps. We can do this by defining a new linear system equivalent to \eqref{eq:linear} where $t_\text{hor}$ time steps in \eqref{eq:linear} correspond to one time step in the new system. We can then use an extension of Lyapunov stability for set-valued functions (\cite[Theorem 4]{LM:05}) to prove the asymptotic stability of the system. 

\subsection{MPC with charging constraints}\label{subsec:yes_charge}
In this section we extend the results in the previous section to account for range limitations and charging constraints associated with electric vehicles. To do this, we first augment the state vector $x(t)$ with the charge of each vehicle, $q^k(t)$. The new state vector becomes $x' = [d_{ij}\; ^{T_i}p^k_i\; u^k_i\; q^k]^{\intercal}$. The state $q^k(t)$ is propagated at each time step according to \eqref{eq:qk}. However, \eqref{eq:qk} is piecewise linear (due to the $\min$ operator) so the extended system cannot be written in the form of \eqref{eq:linear}. As we will see, this will not be an issue for the MPC algorithm. Finally, we add the range constraints \eqref{eq:ccv} and \eqref{eq:ccw} to the definition of $\mathcal{U}(t)$. To summarize, the augmented feasible states and controls are
\vspace{-1mm}

{\small\begin{align}
\mathcal{X}' = \left\{x' = [x\; q^k]^\intercal \bigg | 
\begin{array}{l}
x \in \mathcal{X} \\
q^k \in \mathbb{R}^\mathcal{V}, \, \, 0 \leq q^k \leq 1
\end{array}
 \right \},
\label{eq:xset2}
\vspace{-1mm}
\end{align}}
{\small\begin{align}
\mathcal{U}'(t) = \left\{u' = [v^k_{ij}\; w^k_{ij}]^\intercal \Bigg | 
\begin{array}{l}
v^k_{ij} \in \{0,1\}^{|\mathcal{V}|N^2}, \quad v^k_{ii} = 0 \\
w^k_{ij} \in \{0,1\}^{|\mathcal{V}|N^2}, \quad w^k_{ii} = 0 \\
u' \text{ satisfies \ref{eq:constr1}, \ref{eq:constr4}, \ref{eq:ccv}, and \ref{eq:ccw}}
\end{array}
 \right \}.
\label{eq:uset2}
\end{align}}
\vspace{-2mm}

We turn our attention back to $q^k(t)$ and notice that from \eqref{eq:qk}, $q^k(t+1)$ satisfies the following two linear inequalities

{\small \begin{align}
q^k(t+1) &\leq q^k(t) + \alpha_c \sum_{i \in \mathcal{N}} u^k_i(t+1) - \alpha_d \sum_{i \in \mathcal{N}, T_i}\ ^{T_i}p^k_i(t+1) \label{eq:qineq1} \\
q^k(t+1) &\leq 1 - \alpha_d \sum_{i \in \mathcal{N}, T_i}\ ^{T_i}p^k_i(t+1). \label{eq:qineq2}
\end{align}
}
Equation \eqref{eq:qk} can be satisfied in our MPC algorithm by satisfying \eqref{eq:qineq1} and \eqref{eq:qineq2} and maximizing $q^k$. 
\begin{alg}[MPC with charging constraints]\label{alg:mpc2}
At each time instant $t \in \mathbb{N}$ the controls $u'(t)_{|t}, u'(t+1)_{|t},... u'(t+t_{\text{hor}}-1)_{|t}$ are obtained by solving the optimization problem
\vspace{-2mm}

{\small\begin{align}
\underset{u'(t),...,u'(t+t_{\text{hor}}-1)}{\text{minimize}}  & \sum_{\tau = t}^{t+t_{\text{hor}}-1} \bigg (
J_x(x(\tau+1)) + \rho_1 J_u(u'(\tau))  \notag \\
& - \rho_2 \sum_{k \in \mathcal{V}} q^k(\tau+1) \bigg )
  - \rho_c J_c(x(t_\text{hor}))   \notag \\
  \text{subject to} \;\;\quad & x(\tau+1) = Ax(\tau) + Bu(\tau) \notag \\
 & q^k(\tau+1) \leq q^k(\tau) + \alpha_c \sum_{i \in \mathcal{N}} u^k_i(\tau+1) - \notag \\ & \alpha_d \sum_{i \in \mathcal{N}, T_i}\ ^{T_i}p^k_i(\tau+1) \notag \\
 & q^k(\tau+1) \leq 1 - \alpha_d \sum_{i \in \mathcal{N}, T_i}\ ^{T_i}p^k_i(\tau+1) \notag \\
 & x'(\tau+1) \in \mathcal{X}' \notag \\
 & u'(\tau) \in \mathcal{U}'(\tau) \notag \\
 & \tau = t,\ldots,t+t_{\text{hor}}-1 \notag
\end{align}}
where $\rho_1 > 0$, $\rho_2 > 0$, and $\rho_c>0$. Implement $u'(t)_{|t}$ and repeat the optimization at the next time instant.
\end{alg}

The next theorem shows that Algorithm 2 solves the ARP with charging constraints.  

\begin{theorem}[Asymptotic stability of Algorithm 2]\label{thm:mpc2} 
Suppose $t_{\text{hor}} \geq 2(1+\frac{\alpha_d}{\alpha_c})\max_{i,j \in \mathcal{N}} t_{ij} $. Then Algorithm \ref{alg:mpc2} solves the AMoD regulation problem with charging constraints.
\end{theorem}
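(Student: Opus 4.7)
The plan is to parallel the proof of Theorem \ref{thm:mpc1}, applying the set-valued Lyapunov argument from \cite[Theorem 4]{LM:05} to the closed-loop MPC trajectory. Two additions are needed relative to the charge-free case: a persistent feasibility result tailored to the augmented sets $\mathcal{X}'$ and $\mathcal{U}'(t)$, and an extended ``time-to-serve-one-customer'' budget that now must absorb recharge delay, which is precisely the source of the extra factor $(1+\alpha_d/\alpha_c)$.

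First I would verify persistent feasibility of Algorithm \ref{alg:mpc2}: the identically zero control $u' = \mathbf{0}$ satisfies \eqref{eq:ccv}--\eqref{eq:ccw} trivially and so lies in $\mathcal{U}'(t)$ for every $t$; combined with Proposition \ref{prop:feasible} applied to the unaugmented components, and with \eqref{eq:qineq1}--\eqref{eq:qineq2} guaranteeing $q^k(t+1) \in [0,1]$ whenever $q^k(t) \in [0,1]$, this ensures $x'(t+1) \in \mathcal{X}'$ along any closed-loop execution and therefore that the MPC problem is always feasible.

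The heart of the proof is to exhibit, whenever $J_x(x(t)) > 0$, an admissible control sequence of length $t_{\text{hor}}$ that services at least one waiting customer; by optimality, the actual MPC input will then do at least as well on $J_x$. Writing $t_{\max} = \max_{i,j \in \mathcal{N}} t_{ij}$, pick any $(i,j)$ with $d_{ij}(t) > 0$ and any vehicle $k^*$. In the worst case, $k^*$ must (i) reach an idle station $\ell$ in at most $t_{\max}$ steps (either it is already there, or it completes its current trip), (ii) accumulate up to $\alpha_d t_{\ell i}$ units of charge to rebalance to station $i$, which takes at most $\alpha_d t_{\max}/\alpha_c$ steps by \eqref{eq:qineq1}, (iii) rebalance in at most $t_{\max}$ steps, (iv) accumulate up to $\alpha_d t_{ij}$ further units for the service trip, at most $\alpha_d t_{\max}/\alpha_c$ more steps, and (v) complete the service trip in at most $t_{\max}$ steps. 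Mirroring the argument of Theorem \ref{thm:mpc1}, steps (i) and (iii) can be collapsed: if $k^*$ is in transit, its arrival station can be taken as the starting point $\ell$ of the rebalance leg, so the total budget reduces to $2 t_{\max} + 2(\alpha_d/\alpha_c) t_{\max} = 2(1+\alpha_d/\alpha_c) t_{\max} \leq t_{\text{hor}}$.

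With this construction in hand, the Lyapunov descent step is then exactly as in Theorem \ref{thm:mpc1}: viewing the optimal MPC value function every $t_{\text{hor}}$ steps yields a strict decrease by at least one unit of $J_x$ until $J_x(x(t)) = 0$, and the set-valued Lyapunov criterion of \cite[Theorem 4]{LM:05} gives $J_x(x(t)) \to 0$, so Algorithm \ref{alg:mpc2} solves the ARP with charging constraints. The main obstacle I anticipate is the explicit construction in the previous paragraph, where the piecewise-linear charging dynamics \eqref{eq:qk}, the hard range constraints \eqref{eq:ccv}--\eqref{eq:ccw}, and the one-task-at-a-time constraints \eqref{eq:construp}--\eqref{eq:constr1} must all be satisfied simultaneously while the charge profile of $k^*$ is manipulated. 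A secondary subtlety is calibrating $\rho_1, \rho_2, \rho_c$ small enough that the terminal-charge bonus $-\rho_c J_c$ and the rebalancing penalty $\rho_1 J_u$ never induce the optimizer to defer customer service in exchange for a higher final battery level, a tradeoff absent from Theorem \ref{thm:mpc1}.
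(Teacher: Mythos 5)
Your overall strategy is exactly the paper's: reuse the set-valued Lyapunov argument of Theorem~\ref{thm:mpc1}, verify persistent feasibility of the augmented system, and enlarge the worst-case ``time until some $v^k_{ij}$ can be set to one'' by two recharge periods of length $(\alpha_d/\alpha_c)\max_{i,j}t_{ij}$ each. The one step that does not hold up as written is your time accounting. Your five-step worst case sums to $3t_{\max}+2(\alpha_d/\alpha_c)t_{\max}$, which exceeds the hypothesis, and the remedy you propose --- ``collapsing'' steps (i) and (iii) --- is not valid: reaching the arrival station of the current trip and then rebalancing from there to the customer's origin are two distinct travel legs, each of which can cost up to $t_{\max}$, so they cannot be merged into one. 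The correct fix, and what the paper implicitly relies on, is that step (v) should not be counted at all: by \eqref{eq:dij}, $d_{ij}$ decreases at the time step at which $v^{k}_{ij}=1$, i.e., at pickup/departure, so the Lyapunov decrement $\mu(W(\tilde{x}(T+1)))<\mu(W(\tilde{x}(T)))$ is triggered as soon as the customer boards, not when the trip completes. Only steps (i)--(iv) must fit in the horizon (with the vehicle holding charge at least $\alpha_d t_{ij}$ at the moment of pickup per \eqref{eq:ccv}), which gives exactly $2t_{\max}+2(\alpha_d/\alpha_c)t_{\max}=2(1+\alpha_d/\alpha_c)t_{\max}$.

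Two smaller remarks. Your concern about calibrating $\rho_1,\rho_2,\rho_c$ so that the charge bonuses do not induce the optimizer to defer service is a legitimate subtlety that the paper's own (very terse) proof does not address either; the descent argument tacitly assumes the optimizer prefers a unit decrease of $J_x$ over any gain in the charge terms, which requires these weights to be sufficiently small. Flagging it is to your credit, but a complete proof would need to resolve it rather than list it as an obstacle. Also, the charging-time bound $(\alpha_d/\alpha_c)t_{ij}$ ignores integrality of time steps (a ceiling is really needed); the paper makes the same simplification.
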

The proof for this theorem follows the same procedure as the proof of Theorem \ref{thm:mpc1} and
\ifarxiv
is also reported in Appendix \ref{apx:proofs}.
\else
can also be found in \cite{RZ-FR-MP:16EV}.
\fi
As in Theorem \ref{thm:mpc1}, we define a time-scaled version of \eqref{eq:linear}. The key difference is to note that a vehicle may be completely depleted of charge after a trip and requires $({\alpha_d}/{\alpha_c} )t_{ij}$ time steps to charge before departing on its next trip. We can further observe that fast charging reduces the time horizon needed to maintain stability.
\begin{remark}
The time horizon bounds given in Theorem \ref{thm:mpc1} and \ref{thm:mpc2} assume the worst case scenario, which would very rarely occur in practice. Thus, in most practical cases, a shorter time horizon (which reduces computational cost) should also reduce $d_{ij}$ to zero. In Section \ref{sec:regulation} we will show this is indeed the case.
\end{remark}

In the next section we show through simulation that the MPC algorithms solve the AMoD regulation problem and we benchmark their performance against other algorithms in the literature.

\section{Simulation Results} \label{sec:results}
In this section, we present three sets of simulation results that demonstrate the correctness and performance of our MPC approach. First, we show through simulation that the AMoD regulation problem can indeed be solved using Algorithm \ref{alg:mpc1} and Algorithm \ref{alg:mpc2}. Second, we show using real taxi data that Algorithm \ref{alg:mpc1} yields real-time performance on small to medium-sized systems and outperforms several state-of-the-art algorithms from the literature in terms of customer wait times. Finally, we study how the charge/discharge rate of batteries affect the performance of an AMoD system with electric vehicles. For all simulations, Algorithm \ref{alg:mpc1} and \ref{alg:mpc2} were implemented using the IBM CPLEX solver for mixed-integer linear programs (MILP) \cite{cplex}. 

\subsection{AMoD regulation} \label{sec:regulation}
To validate Algorithms \ref{alg:mpc1} and \ref{alg:mpc2}, an initial $d_{ij}$ was randomly generated with up to 30 customers at each station while $c_{ij}(t)$ was set to zero for all $t$. The simulation was performed with 30 vehicles and 10 stations, with 3 vehicles at each station to begin with. The maximum travel time between two stations was 7 time steps. For the system without charging constraints, $t_{\text{hor}}$ was set to 10 steps while for the system with charging constraints, $t_{\text{hor}}$ was set to 20. 
The weight of the secondary objectives were set to $\rho_1=0.01$ and, for the system with charging constraints, $\rho_2=0.001$, $\rho_c=0$. 
Figure \ref{fig:amodreg1} shows the number of customers waiting at each of the 10 stations as a function of time. Figure \ref{fig:amodregcharge} shows the number of customers vs. time for a system with charging constraints. In this case, the initial charge of all vehicles was set to 0.8 and vehicles could charge twice as fast as they could discharge ($\alpha_c = 0.2$ and $\alpha_d = 0.1$). Figure \ref{fig:amodcharge2v} shows the charge levels of two of the vehicles and illustrates that when future customer demand is not taken into account, a general strategy for each vehicle is to service customers until its batteries are almost depleted, then charge just enough to service the next customer. The need to recharge after trips results in longer wait times and in this case, a longer total time to service all the customers (50 minutes for the case with charging constraints and 30 minutes without).
\vspace{-2mm}
\begin{figure}[h]
\centering
\hspace{-1em}
		\subfigure[]{\label{fig:amodreg1}    \includegraphics[width=.16\textwidth]{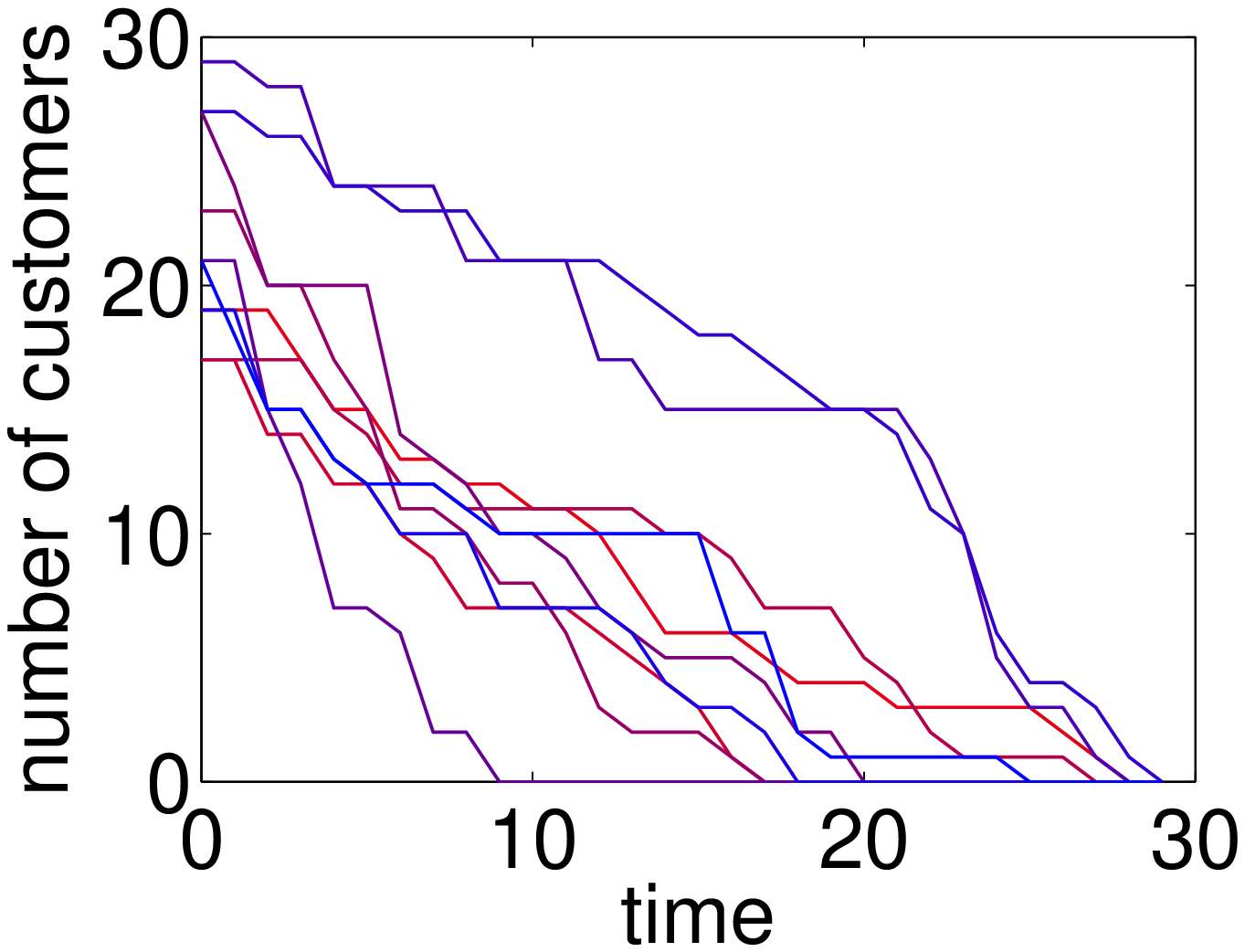}}  \hspace{-1em}
		\subfigure[]{\label{fig:amodregcharge}	\includegraphics[width=0.16\textwidth]{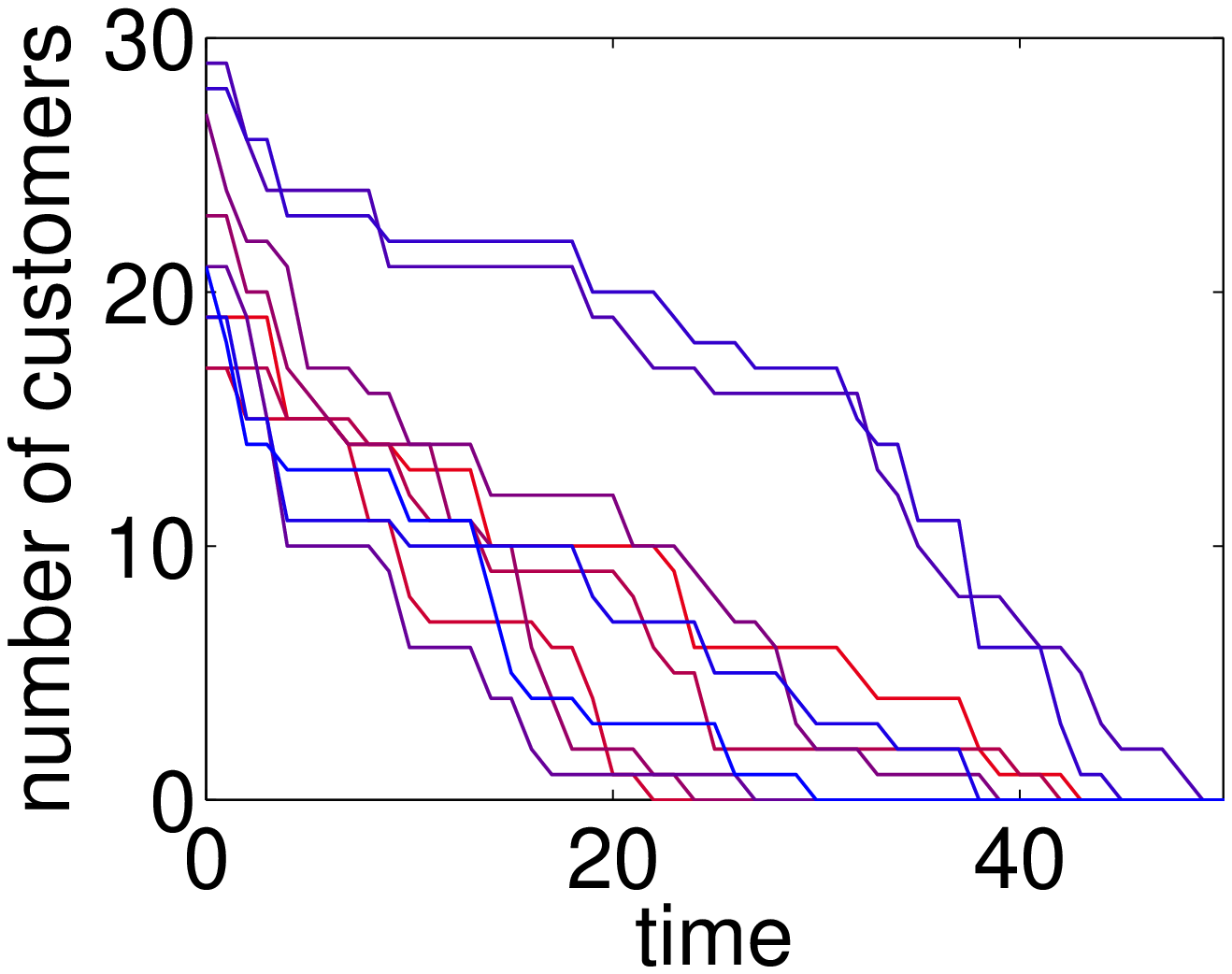}} \hspace{-1em}
		\subfigure[]{\label{fig:amodcharge2v}	\includegraphics[width=0.16\textwidth]{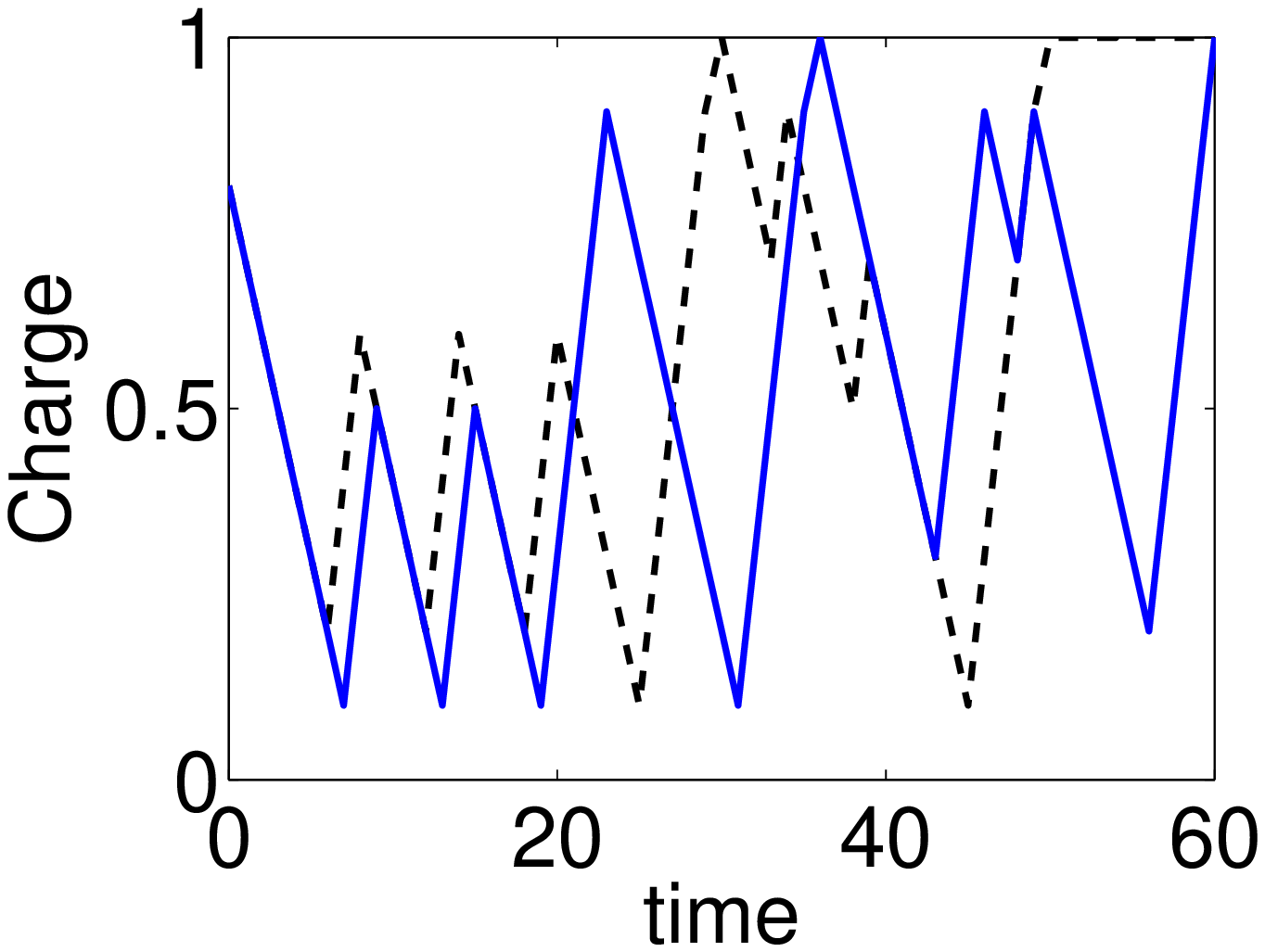}}
\vspace{-0.5em}
\caption{\ref{fig:amodreg1}: Number of customers waiting at each station as a function of time for 10 stations, 30 vehicles without charging constraints. \ref{fig:amodregcharge}: Number of customers waiting at each station as a function of time with charging constraints. \ref{fig:amodcharge2v}: State of charge for two vehicles as a function of time.}
\label{fig:amodreg}
\vspace{-8mm}
\end{figure}

\subsection{Performance of MPC} \label{sec:sim_performance}
To evaluate the performance of our MPC algorithm, we conducted an extensive simulation study comparing Algorithm \ref{alg:mpc1} to several other AMoD and taxi dispatch algorithms found in the literature using real New York taxi data\footnote{Courtesy of the New York City Taxi \& Limousine Commission}. We show that Algorithm \ref{alg:mpc1} not only outperforms other algorithms in terms of customer wait times, but can be used as an ``optimal'' baseline to quantitatively evaluate the performance of other taxi dispatch or AMoD algorithms. The simulations are performed with 40 vehicles for 24 hours with a time step of 6 seconds. Taxi trips within New York City's Financial District area (Lower Manhattan, south of Canal St.) are extracted for nine Mondays in March and April of 2012, resulting in 2300-3500 trips per day. The simulated vehicles move along the Manhattan distance between pickup and drop-off locations, with speeds estimated from the data to account for congestion. 
The Financial District is divided into 15 regions: the center of each region (a ``station'') is computed using \emph{k}-means clustering on historical customer origin/destination data. In the simulation, customers are \emph{not} required to go to a station to receive service: once a vehicle is assigned to a customer, it drives to the customer's location and drops her/him off at the requested destination, then drives to the nearest station. The role of stations is to 1) model the availability of parking and charging facilities and 2) provide a discretized model for the vehicle rebalancing problem.

For this simulation study, we implemented six dispatch algorithms, including two versions of Algorithm \ref{alg:mpc1}:
\begin{enumerate}
\item \emph{Nearest-neighbor dispatch (NN):} Each customer is assigned the nearest free vehicle. If no vehicles are available, the customer request is added to a first-in, first-out queue. Free vehicles move according to a random walk until assigned to a customer. 
\item \emph{Collaborative dispatch (CD) \cite{KTS-NHD-DHL:10}:} customer requests are aggregated in a queue (a single queue is used for the entire Financial District) and, when the queue size reaches a threshold $x$, the same number of free vehicles are dispatched to the customers. Vehicles are matched to customers to minimize the total distance they need to drive empty. The algorithm is modified from \cite{KTS-NHD-DHL:10} in two ways: 1) a time-varying queue size $x(t)$ is employed to account for highly time-varying demand, and 2) the optimization is solved in a centralized fashion. 
\item \emph{Markov redistribution (MR) \cite{MV-JA-DR:12}:} Customer demand information is used to rebalance empty vehicles among stations in order to drive the vehicles' distribution towards the distribution of passenger arrivals. The problem is cast as a linear program (LP) and yields a randomized rebalancing strategy for each empty vehicle. The algorithm implemented in our simulation is modified from \cite{MV-JA-DR:12} in two ways: 1) the problem is solved exactly as an LP and 2) in order to accommodate time-varying demand, the algorithm rebalances vehicles based on the sum of (i) estimated future customer demand and (ii) current number of passengers waiting. Since the MR algorithm is randomized, its performance can vary widely between trials: for each day, the results presented are the median of ten executions.

\item \emph{Real-time rebalancing (RR) \cite{MP-SLS-EF-DR:12}:} This algorithm rebalances vehicles based on current waiting customers. At each rebalancing epoch (every 2 minutes) the algorithm computes the number of vehicles at or enroute to each station and solves a linear program to evenly distribute excess free vehicles throughout the system. It uses the same fifteen stations as the MR algorithm. 
\item \emph{Algorithm 1, MPC with sampled customer arrivals (MPCS):} The algorithm uses the same 15 stations employed by MR and solves the problem with a time horizon of 15 minutes. Customer arrival rates, computed using historical data, are sampled as a Poisson process and fed into Algorithm \ref{alg:mpc1} as predicted future arrivals ($c_{ij}$). This sampling is done every 2 minutes to prevent unnecessary rebalancing. The rebalancing weight is set to $\rho_1=0.01$. A small term is added to the cost function to promote a uniform distribution of vehicles among the stations at the end of the optimization horizon.
\item \emph{Algorithm 1, MPC with full arrival information (MPCF):} The actual customer arrivals over the time horizon (15 minutes) are fed into Algorithm \ref{alg:mpc1} as $c_{ij}$. The algorithm optimizes vehicle assignments with perfect knowledge of the next 15 minutes, and can therefore serve as a baseline for optimal performance, as long as wait times are small compared to the optimization horizon. The rebalancing weight is set to $\rho_1=0.01$. 
\end{enumerate}

Table \ref{tab:peak} shows the peak customer wait times for each algorithm over the nine days that were simulated (every Monday from March 5 to April 30, 2012). The algorithm that yielded the best performance for each day (shortest peak wait time) is shown in bold (MPCF was excluded since it is non-causal). We note that in all the days where the peak wait time for MPCF is less than the optimization horizon of 15 minutes (day 3, 5, 6, and 9), MPCF achieves the best performance and can effectively serve as an optimal baseline. 
When the customer demand is high and the wait time is greater than the optimization horizon, the performance of MPC algorithms suffers: MPCS and MPCF are unable to effectively use information on future arrivals because new passengers are generally not serviced within the optimization horizon. Thus, in this regime, the MPCF algorithm is not a reliable optimal baseline. Nevertheless, MPCS achieves the best performance in 7 of the 9 days simulated. It's also worth noting that over the 4 days with short wait times, MPCS achieved peak wait times that were on average 34\% shorter than the next best algorithm, RR. Figure \ref{fig:amodWaitTime} shows the simulation results for day 5 (April 2). In addition to achieving a lower peak wait time, the MPC algorithms are also able to recover quickly to service the remaining customers after peak demand. This is illustrated in Table \ref{tab:time50}, which lists the fraction of time spent for each algorithm where the average wait time was at least 50\% of the respective peak. In this respect, MPCS again consistently outperforms the other algorithms.

A few observations about the other algorithms are in order. First, performance of the CD algorithm is generally slightly better than the NN algorithm, and is consistent with the results obtained by the authors in \cite{KTS-NHD-DHL:10}. However, the problem of selecting a threshold queue size (i.e. the algorithm's tuning parameter) to maximize performance remains open. As a matter of fact, in three instances the NN algorithm outperforms CD in our simulations. Second, as neither NN nor CD rebalance empty vehicles so as to anticipate future demand, their performance highlights the critical importance of preemptive routing to achieve good quality-of-service. Third, since the MR algorithm is a randomized algorithm, its performance can be (and occasionally is) very suboptimal and, at times, significantly worse than NN. Median performance (over 10 runs), however, is generally better than both NN and CD. Finally, MR is conceived for steady-state systems. Thus, while the algorithm can be extended to time-varying systems, it is unclear whether the performance presented in \cite{MV-JA-DR:12} can be replicated in scenarios with highly variable customer demand.
\begin{figure}[h]
\vspace{-1em}
\centering
\includegraphics[width=.47\textwidth]{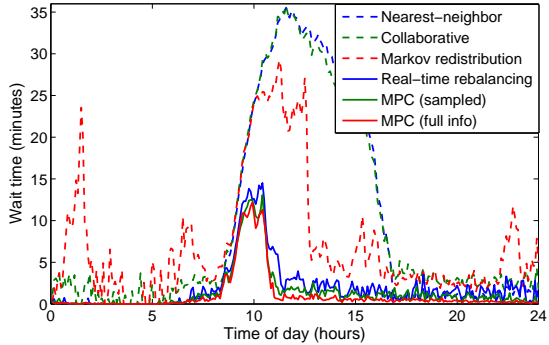} \vspace{-0.5em}
\caption{Average customer wait times throughout the day (April 2, 2012) for all dispatch algorithms.}
\label{fig:amodWaitTime}
\vspace{-1.4em}
\end{figure}
\begin{table}[h]
\centering
\caption{Peak wait time in minutes for each algorithm}
\begin{tabular}{ l c c c c c c c c c }
Day  & 1 & 2 & 3 & 4 & 5 & 6 & 7 & 8 & 9 \\ \hline
NN   & 36 		& 34 		& 27 		& 56 		& 36 		& 9 		& 38 		& 41 		& 24 \\
CD   & 35 		& 34 		& 24 		& 56 		& 36 		& 10 		& 39 		& 41 		& 25 \\
MR   & 37 		& 53 		& 11 		& 31 		& 29 		& 13 		& 32 		& 35 		& 16 \\
RR   &\textbf{16}& 16 		& 7 		& 20 		& 15 		& 7 		& 27 		&\textbf{21}& 10 \\
MPCS & 17 		&\textbf{15}&\textbf{4}	&\textbf{18}&\textbf{13}&\textbf{3}	&\textbf{24}& 24 		&\textbf{7} \\
MPCF* & 18 		& 19 		& 3 		& 19 		& 12 		& 2 		& 29 		& 24 		& 6
\label{tab:peak}
\end{tabular}
\end{table}
\vspace{-2mm}
\setlength{\tabcolsep}{4pt}
\begin{table}[h]
\vspace{-.5em}
\centering
\caption{Fraction of time where the average wait time was at least 50\% of peak}
\begin{tabular}{ l c c c c c c c c c }
Day  & 1 & 2 & 3 & 4 & 5 & 6 & 7 & 8 & 9 \\ \hline
NN   & 0.15 		& 0.16 		& 0.18 		& 0.34 		& 0.27 		& 0.07 		& 0.17 		& 0.19 		& 0.12 \\
CD   & 0.15 		& 0.16 		& 0.16 		& 0.35 		& 0.26 		&\textbf{0.07}& 0.17 		& 0.19 		& 0.11 \\
MR   & 0.09 		& 0.06 		& 0.12 		& 0.09 		& 0.13 		& 0.08 		& 0.10 		& 0.08 		& 0.09 \\
RR   & 0.08			& 0.07 		& 0.11 		& 0.09 		& 0.07 		& 0.07 		&\textbf{0.07}& 0.06		& 0.05 \\
MPCS &\textbf{0.06}	&\textbf{0.05}&\textbf{0.09}&\textbf{0.08}&\textbf{0.06}& 0.17\tablefootnote{Note that the peak wait time for MPCS is only 3 minutes} & 0.08	&\textbf{0.06}&\textbf{0.04} \\
MPCF* & 0.07 		& 0.04 		& 0.06 		& 0.08 		& 0.06 		& 0.08 		& 0.07 		& 0.06 		& 0.03
\label{tab:time50}
\end{tabular}
\vspace{-1em}
\end{table}

The median runtime per iteration of Algorithm \ref{alg:mpc1} on a 2.8 GHz Intel Core i7 PC with 16GB of RAM was 5.5 seconds. The other algorithms analyzed were significantly faster: the NN, CD, MR and RR algorithms had a median runtime per iteration of 2.2, 0.3, 18 and 332 ms respectively. 
Nevertheless, our results show that Algorithm \ref{alg:mpc1} is amenable to a real-time implementation for a moderately-sized system.

\subsection{Effect of charge rate}
In this section we explore the performance limitations of AMoD systems with electric vehicles. Specifically, we would like to answer the question: how does charging rate affect the ability of an AMoD system to service customer demand? To this end, simulations were performed using 40 vehicles with taxi data from 7 am to 3 pm (period of high demand) on April 2, 2012. The simulations were performed for the MPCS algorithm (see Section \ref{sec:sim_performance}) with charging constraints (Algorithm \ref{alg:mpc2}). The discharge rate, $\alpha_d$, was chosen to be $0.0037$, which corresponds to an electric vehicle such as a Nissan Leaf or BMW i3 driving at 20 km/h while using 70\% of its battery capacity (to avoid over-discharging, which could damage the batteries). Three charging rates were used: $\alpha_c = \alpha_d$, $\alpha_c = 2\alpha_d$, and $\alpha_c=4\alpha_d$, which correspond to a charging time of 4 hours, 2 hours, and 1 hour, respectively. The charging times represent realistic current electric vehicle charging capabilities. To avoid prematurely depleting the batteries, a higher final charging cost $\rho_c$ was assigned to the cases with slower charge rates. Figure \ref{fig:chargeWaitTime} shows the customer wait times for the four simulations performed, and \ref{fig:chargeState} shows the average state-of-charge of vehicles over time. 

We first note that as long as there is excess battery capacity in the system, the customer wait times for an electric AMoD system are comparable to the AMoD system without charging constraints (see Figure \ref{fig:amodWaitTime}). In this case, a charge rate of $4\alpha_d$ is stabilizing and is able to support future demand. A charge rate of $2\alpha_d$ is able to service demand without increasing wait time, but the batteries are almost fully depleted by the end of the simulation. A charge rate of $\alpha_d$ is too slow to support customer demand for the entire simulation duration. The large final charging cost $\rho_c = 100$ trades off quality of service with battery capacity, resulting in a slightly higher peak wait time. Even so, batteries become depleted near the end of the simulation period and the algorithm begins to prefer battery charging over servicing customers, resulting in longer wait times. 

\vspace{-1mm}
\begin{figure}[h]
\vspace{-0.9em}
\centering
		\subfigure[]{\label{fig:chargeWaitTime}    \includegraphics[width=.235\textwidth]{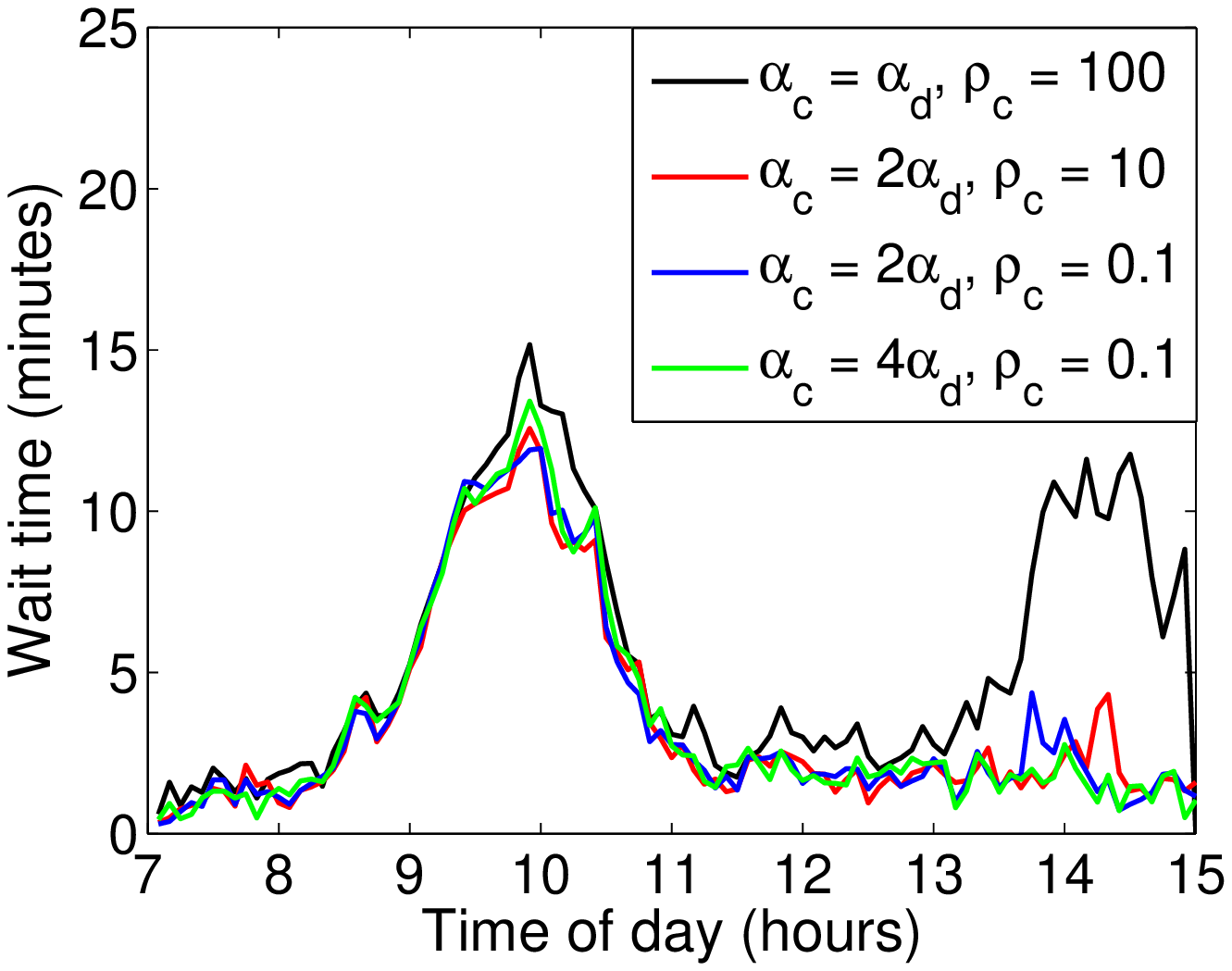}} \hspace{-1em}
		\subfigure[]{\label{fig:chargeState}	\includegraphics[width=0.235\textwidth]{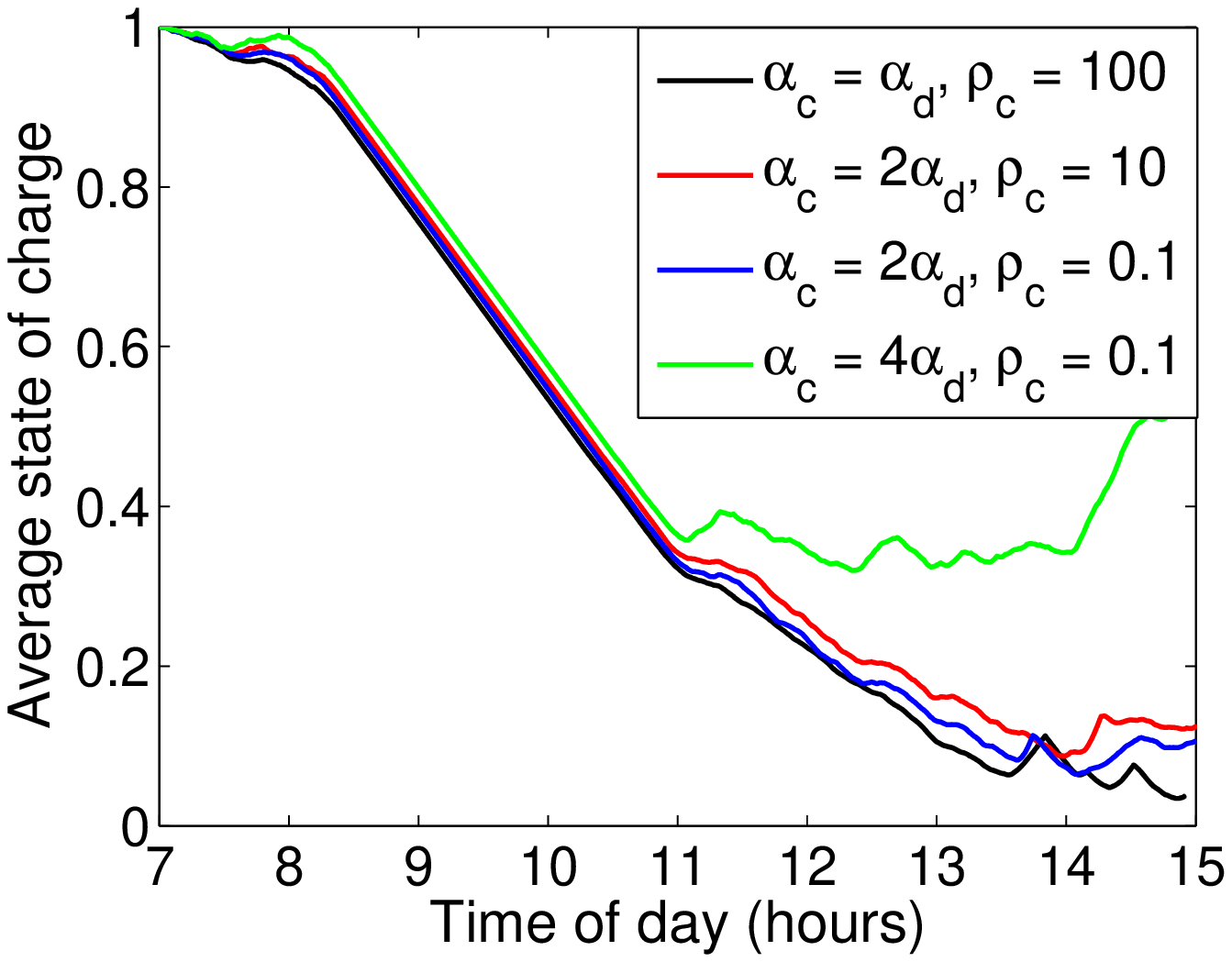}}
\vspace{-0.5em}
\caption{\ref{fig:chargeWaitTime}: Average customer wait time for April 2, with charging constraints. \ref{fig:chargeState}: Average vehicle charge as a function of time for different charging rates and final charging costs.}
\label{fig:charging}
\vspace{-1em}
\end{figure}

\section{Additional Model Extensions}\label{sec:extensions}
While we have mainly focused on extending the AMoD model for charging constraints, many other real-world constraints  can be directly incorporated into our modeling framework with little modification. We briefly touch on some of the possible extensions to highlight the flexibility and potential of our approach. 
\begin{enumerate}
\item \normalsize{\bf Limited number of charging stations}. Let $h_i, i \in \mathcal{N}$, represent the number of charging stations at each station. We can assume that $\sum_{i \in \mathcal{N}} h_i \geq |\mathcal{V}|$, that is, there is at least one charger for each vehicle in the system. Vehicles can only wait at a charging station, though they can still pick up and drop off passengers at stations with no free charging stations. This adds the constraint $\sum_{k \in \mathcal{V}} u^k_i(t) \leq h_i$ for all $i \in \mathcal{N}$. A limited number of charging stations will also promote rebalancing, as it forces vehicles to travel to stations with free charging stations (or likely a deficit of vehicles).

\item \normalsize{\bf Customer priorities}. Taking into account priority waiting involves simply adding a weighting matrix to the objective function. Rather than minimizing $\sum_{\tau = t}^{t+t_{\text{hor}}-1}\allowbreak J_x(x(\tau+1))$ we minimize $\sum_{\tau = t}^{t+t_{\text{hor}}-1} Q(\tau+1)J_x(x(\tau+1))$. In this way, we can give a higher priority to customers who have been waiting for a longer period of time, or assign the weights based on price incentives. This approach can also be used for customer arrivals with known time windows. 

%


\item \normalsize{\bf Interaction with the smart grid}. Electric vehicles may act as energy storage devices to enable intermittent renewable energy such as solar and wind \cite{WJM-CEBB-LDB:10}. Vehicles can ``sell'' their energy to the grid during peak hours and charge themselves during off-peak hours. If the charging/discharging schedule is known, the charging rates $\alpha_c$ can be adjusted accordingly to facilitate the energy transfer, at the same time maintaining quality of service in the AMoD system.
\end{enumerate}
\section{Conclusions and Future Work} \label{sec:conc}
In this paper we presented a model predictive control approach to optimize vehicle scheduling and routing in an AMoD system. Our approach allows the easy integration of a number of real-world constraints, in particular we focused on charging constraints. We presented two MPC algorithms and rigorously showed that they are able to regulate an AMoD system (i.e., drive an initial customer demand to zero assuming no additional customers arrive over time). Algorithm performance for the case of dynamic arrivals was evaluated using real-world data through simulations. Overall, numerical results show that the proposed  MPC algorithms outperform previous control strategies for AMoD systems.

This paper leaves numerous important extensions open for
further research. 
First, we plan to address the algorithmic aspect of scaling up the MILP formulation of the 
optimization problem to large-scale (i.e., city-wide) systems. Since the computational complexity of the MILP formulation scales exponentially with the number of stations and vehicles, this will likely require the use of parallel architectures and ad hoc approximations. Second, we plan to study the inclusion of additional operational constraints (e.g., time windows), address the congestion aspect (currently roads are assumed to have infinite capacity), and study the impact on computation time.  Third, it is of interest to couple an AMoD system with alternative  mass transit options and develop an MPC approach for the optimal coordination algorithms of such
an intermodal system. 
Fourth, we plan to consider additional, larger-scale case studies to derive economic guidelines about the development of AMoD systems. Finally, we
plan to demonstrate the algorithms on real driverless vehicles
providing AMoD service in a gated community.

\section*{Acknowledgment}
The authors would like to acknowledge Luke Shimanuki and Leonardo Franco-Munoz for their contribution to the implementation of the algorithms in Section \ref{sec:results}.

\bibliographystyle{IEEEtran}
{\small
\bibliography{../../../bib/alias,../../../bib/main}
}

\ifarxiv

\ifimages
\begin{appendices}
\section{}
\else
\appendix
\fi
\label{apx:proofs}
In this section we provide the proofs of Proposition \ref{prop:feasible} and Theorems \ref{thm:mpc1} and \ref{thm:mpc2}. 

\begin{proof}[Proof of Proposition \ref{prop:feasible}]
Let $x^+ = [d^+_{ij}\;\, ^{T_i}p^{k+}_i\; u^{k+}_i]^\intercal$ and $u = [v^k_{ij}\; w^k_{ij}]^\intercal$. First, we note that if $v^k_{ij}$ satisfies \eqref{eq:constr4}, $d_{ij}^+ \in \{\mathbb{N} \cup {0}\}^{N^2}$ and is feasible. Next we show that $^{T_i}p^{k+}_i \in \{0,1\}^{\mathcal{D}}$ and satisfies \eqref{eq:pconstraint}. To show that $^{T_i}p^{k+}_i$ can only take on 0 or 1, first consider $T_i = T_{\text{max},i}$. In this case, by \eqref{eq:constr1}, $\sum_{j:t_{ji}-1 = T_{\text{max},i}} (v^k_{ji} + w^k_{ji}) \leq 1$, so $^{T_{\text{max},i}}p^{k+}_i \leq 1$. Now consider $T_i < T_{\text{max},i}$. Here, we only need to consider the case when $^{T_i+1}p^k_i = 1$. When this is the case, by \eqref{eq:construp}, $u^k_i = 0$ for all $i \in \mathcal{N}$. Also, by \eqref{eq:pconstraint}, $^0p^k_i = 0$ since $T_i + 1 > 0$ ($T_i \geq 0$). Putting these facts into \eqref{eq:uit}, we see that $\sum_{j \in \mathcal{N}} (v^k_{ij} + w^k_{ij}) = 0$ which proves that $^{T_i}p^{k+}_i \in \{0,1\}^{\mathcal{D}}$. To show constraint \eqref{eq:pconstraint} is satisfied, take the sum of \eqref{eq:tpki}: 
{\small\begin{equation}
\sum_i \sum_{T_i} \; ^{T_i}p^{k+}_i = \sum_{i} \sum_{T_i = 1}^{T_{\text{max},i}} \; ^{T_i}p^k_i + \sum_{i,j}(v^k_{ji} + w^k_{ji}).
\vspace{-2mm}
\label{eq:sump}
\end{equation}} 
By \eqref{eq:pconstraint} and \eqref{eq:constr1}, both terms on the right hand side are less than or equal to one. First consider when the second term is equal to one, from \eqref{eq:uit}, either $u^k_i = 1$ or $^0p^k_i = 1$, and in both cases, $\sum_{i} \sum_{T_i = 1}^{T_{\text{max},i}} \; ^{T_i}p^k_i = 0$. Now consider when the first term on the right hand side of \eqref{eq:sump} is equal to one. In this case, according to \eqref{eq:construp}, both $u^k_i$ and $^0p^k_i$ must be equal to zero, so by \eqref{eq:uit}, $\sum_{i,j}(v^k_{ji} + w^k_{ji}) = 0$. Hence $^{T_i}p^{k+}_i$ satisfies \eqref{eq:pconstraint}. 

To show that $u^{k+}_i$ satisfies \eqref{eq:construp}, take the sum of \eqref{eq:tpki} and \eqref{eq:uit}
{\small\begin{align*}
\sum_i u^{k+}_i + \sum_i \sum_{T_i} \; ^{T_i}p^{k+}_i 
&= \sum_i u^k_i + \sum_i\; ^0p^k_i - \sum_{i,j}(v^k_{ij} + w^k_{ij}) \\
&+ \sum_i \sum_{T_i = 1}^{T_{\text{max},i}}\; ^{T_i}p^k_i + \sum_{i,j}(v^k_{ji} + w^k_{ji}) \\
&= \sum_i u^k_i + \sum_i \sum_{T_i} \;^{T_i}p^{k}_i = 1.
\vspace{-2mm}
\end{align*}}
Hence, \eqref{eq:construp} is satisfied and the proposition is proven.
\end{proof}

The notions of N-step and $\infty$-step reachable sets for the undisturbed system \eqref{eq:linear} are used when proving Lyapunov stability of our MPC algorithms in Theorems \ref{thm:mpc1} and \ref{thm:mpc2}.

\begin{definition}[N-step Reachable Set]
Given an initial condition $x(0) \in \mathcal{X}$, the N-step reachable set is defined recursively as

\vspace{-2mm}
{\small 
\begin{align}
\mathcal{R}_{i+1} := \left\{x^+ \in \mathcal{X} \mid \exists x \in \mathcal{R}_i, u \in \mathcal{U}(i) \emph{ s.t. } x^+ = Ax + Bu\right\},
\end{align}
}
for $i = 0...N-1$ and $\mathcal{R}_0 = x(0)$. 
\end{definition}
\begin{definition}[$\infty$-step Reachable Set]
Given $x(0) \in \mathcal{X}$, the $\infty$-step reachable set of system \eqref{eq:linear} subject to \eqref{eq:uset} is
\begin{align}
\mathcal{R}_{\infty} := \limsup_{N \rightarrow \infty} \mathcal{R}_N,
\end{align}
where the above limit is in a set-theoretical sense (i.e., $\limsup_{N \rightarrow \infty} \mathcal{R}_N = \cap_{N\geq 1} \cup_{m\geq N} \mathcal{R}_m$).
\end{definition}

Our proof of Theorem  \ref{thm:mpc1} relies on the following theorem for set-valued Lyapunov functions, the proof of which can be found in \cite{LM:05}, Theorem 4. 
\renewcommand{\thetheorem}{\Alph{section}.\arabic{theorem}}

\begin{theorem}[Lyapunov stability for set-valued functions]\label{thm:1}
Let $\mathcal{R}$ be a finite dimensional Euclidean space and consider a continuous map $f : \mathbb{N} \times \mathcal{R} \rightarrow \mathcal{R}$ giving rise to the discrete-time system
\begin{equation}
x(t+1) = f(t, x(t)).
\label{eq:discrete}
\end{equation}
Let $\Xi$ be the collection of equilibrium solutions of \eqref{eq:discrete} and $\mathcal{X}^e$ be the set of equilibrium points corresponding to $\Xi$. Let $W : \mathcal{R} \rightrightarrows \mathcal{R}$ be an upper semi-continuous set-valued Lyapunov function satisfying
\begin{enumerate}
\item $x \in W(x)$ for all $x \in \mathcal{R}$, 
\item $W(x^e) = \{x^e\}$ for all $x^e \in \mathcal{X}^e$,
\item $W(x(t+1)) \subseteq W(x(t))$ for all $x(t) \in \mathcal{R}$.
\end{enumerate}
Then, system \eqref{eq:discrete} is uniformly stable with respect to $\Xi$ in the sense of Lyapunov. 
If additionally,
\begin{enumerate}
\setcounter{enumi}{3}
\item there exists a function $\mu : \text{Im}(W) \rightarrow \mathbb{R}_{\geq 0}$, bounded on bounded sets, such that
\begin{equation}
\mu\left( W(x(t+1))\right) < \mu\left(W(x(t))\right)
\label{eq:mu}
\end{equation}
\end{enumerate}
for all $x(t) \in \mathcal{R} \setminus \mathcal{X}^e$, then $x(t) \rightarrow x^e \in \mathcal{X}^e$ as $t \rightarrow \infty$ and the system is asymptotically stable with respect to $\Xi$.
\end{theorem}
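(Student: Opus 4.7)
The plan is to establish the two conclusions separately: first uniform Lyapunov stability from properties 1--3 alone, and then asymptotic convergence from the additional property 4. The central object is the trajectory confinement identity $x(t)\in W(x(t))\subseteq W(x(t-1))\subseteq\cdots\subseteq W(x(0))$, which follows immediately from iterating property 3 on top of property 1. This single chain is what does all the heavy lifting for uniform stability, and it also gives compactness control for the asymptotic argument.

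For uniform Lyapunov stability, I would argue as follows. Fix an equilibrium solution $x^e\in\mathcal{X}^e$ and an $\varepsilon>0$. Since $W$ is upper semi-continuous and $W(x^e)=\{x^e\}$ by property 2, there exists $\delta>0$ such that $W(x)\subseteq B_\varepsilon(x^e)$ whenever $\|x-x^e\|<\delta$ (this is just the $\varepsilon$--$\delta$ form of upper semi-continuity at a point where the image is a singleton). Combining with the confinement chain from the previous paragraph: if $\|x(0)-x^e\|<\delta$ then $x(t)\in W(x(0))\subseteq B_\varepsilon(x^e)$ for every $t\ge 0$. Since $\delta$ depends only on $\varepsilon$ (not on the initial time index), this yields uniform stability with respect to $\Xi$.

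For asymptotic convergence under property 4, the strategy is a LaSalle-style limit argument. The sequence $\mu(W(x(t)))$ is monotonically strictly decreasing whenever $x(t)\notin \mathcal{X}^e$ and is bounded below by $0$, so it converges to some $\mu^\star\ge 0$. Since $W(x(0))$ is bounded (as $W$ is upper semi-continuous with bounded image on a singleton, hence locally bounded) and $x(t)\in W(x(0))$, the orbit is precompact, so I can extract a convergent subsequence $x(t_k)\to x^\star$. I would then show $x^\star\in\mathcal{X}^e$: if not, property 4 together with continuity of $f$ and upper semi-continuity of $W$ would force $\mu(W(f(\,\cdot\,,x^\star)))<\mu^\star$, contradicting the fact that $\mu(W(x(t_k+1)))$ must also converge to $\mu^\star$. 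Since the limit of every convergent subsequence of $\{x(t)\}$ is an equilibrium point and the whole sequence lives in the compact set $W(x(0))$, the trajectory itself converges to $\mathcal{X}^e$, which combined with uniform stability gives asymptotic stability with respect to $\Xi$.

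The main obstacle will be the LaSalle step, because the dynamics $f(t,\cdot)$ are explicitly time-varying, so the usual invariance principle does not apply off the shelf. Making the contradiction work requires passing to the limit in $\mu(W(f(t_k,x(t_k))))$ simultaneously in the state argument and the time argument; this is where the hypothesis that $f$ is continuous and $\mu$ is bounded on bounded sets (so that $\mu\circ W$ inherits a useful regularity on the precompact orbit) becomes essential. A clean way to sidestep the time-varying difficulty is to replace $f(t,\cdot)$ by the auxiliary time-invariant map $(t,x)\mapsto(t+1,f(t,x))$ on $\mathbb{N}\times\mathcal{R}$ and define an extended Lyapunov map $\widetilde W(t,x):=\{t\}\times W(x)$; the hypotheses lift directly, and the LaSalle argument then reduces to the standard time-invariant case.
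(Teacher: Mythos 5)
First, a point of comparison: the paper does not prove this statement at all. It is imported verbatim as Theorem 4 of \cite{LM:05}, and the appendix only proves the results that \emph{use} it (Theorems \ref{thm:mpc1} and \ref{thm:mpc2}). So your proposal is not an alternative route to an in-paper argument; it is an attempt to reprove the cited result from scratch. That said, the first half of your argument is correct and is the standard one: the nesting chain $x(t)\in W(x(t))\subseteq W(x(t-1))\subseteq\cdots\subseteq W(x(0))$ from properties 1 and 3, combined with the $\varepsilon$--$\delta$ form of upper semi-continuity at a point where $W(x^e)=\{x^e\}$, gives $x(t)\in B_\varepsilon(x^e)$ for all $t$ whenever $\|x(0)-x^e\|<\delta$, with $\delta$ independent of the initial time. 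That is uniform stability.

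The asymptotic-stability half has a genuine gap, in two places. (i) You assert $W(x(0))$ is bounded because $W$ is ``u.s.c.\ with bounded image on a singleton, hence locally bounded.'' Upper semi-continuity at $x^e$ controls $W(x)$ only for $x$ \emph{near} $x^e$; it says nothing about $W(x(0))$ for an arbitrary initial condition, so precompactness of the orbit does not follow from the stated hypotheses. (It happens to hold in the paper's application because the reduced state space there is finite, but not in the generality of the theorem.) (ii) More seriously, the LaSalle contradiction does not close. From $x(t_k)\to x^\star\notin\mathcal{X}^e$ you would need $\mu(W(x(t_k+1)))\le\mu^\star-\eta$ for some fixed $\eta>0$, but property 4 gives only a \emph{pointwise} strict inequality with no uniformity: both $\mu(W(x(t_k)))$ and $\mu(W(x(t_k+1)))$ converge to $\mu^\star$ and no contradiction results. ``Bounded on bounded sets'' provides no semicontinuity of $\mu\circ W$, and the time-invariant lift $(t,x)\mapsto(t+1,f(t,x))$ does not rescue the argument: the lifted system has no equilibria and no compact invariant sets, since the time coordinate escapes to infinity, so the standard invariance principle still does not apply. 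Closing this step requires either a uniform-decrease condition or additional regularity of $\mu\circ W$ beyond what the theorem as quoted supplies --- which is presumably why the authors defer to the original, more carefully hypothesized statement in \cite{LM:05} rather than proving it.
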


We are now in a position to present the proof of Theorem \ref{thm:mpc1}.

\begin{proof}[Proof of Theorem \ref{thm:mpc1}]

To show that Algorithm \ref{alg:mpc1} solves the ARP, we need only to look at the subspace $\mathcal{Y}$ of $\mathcal{X}$, where
\begin{equation}
\mathcal{Y} = \left\{\tilde{x} = d_{ij} \mid d_{ij} \in (\mathbb{N} \cup \{0\})^{N^2}, \quad d_{ii} = 0 \right\}.
\end{equation}
This is because the objective $J_x(x(t))$ considered in the ARP is only a function of $d_{ij}$'s. The dynamics of $\tilde{x}$ (referred to as the reduced state) follow \eqref{eq:dij}, and the corresponding reduced control becomes $\tilde{u}(t) = v^k_{ij}(t)$ (note that all other constraints in Section \ref{sec:linear} must still be satisfied). Equation \eqref{eq:dij} can then be written as 
\begin{equation}
\tilde{x}(t+1) = \tilde{x}(t) + \tilde{B}\tilde{u}(t),
\label{eq:linearr}
\end{equation}
where $\tilde{B}$ is the matrix realization of the summation in \eqref{eq:dij}. We can define the reduced N-step reachable set $\mathcal{Y}_N$ and the $\infty$-step reachable set $\mathcal{Y}_\infty$ as the appropriate subspaces of $\mathcal{R}_N$ and $\mathcal{R}_\infty$, respectively. For $t_{\text{hor}}$ transitions, we can write
\begin{align}
\tilde{x}(t+t_{\text{hor}}) &= \tilde{x}(t+t_{\text{hor}}-1) + \tilde{B}\tilde{u}(t+t_{\text{hor}}-1) \notag \\
&= \tilde{x}(t) + \tilde{B}(\tilde{u}(t) + \ldots + \tilde{u}(t+t_{\text{hor}}-1))
\label{eq:thor}
\end{align}
We can further rescale the time variable so that one new time step (denoted by $T$) is equivalent to $t_{\text{hor}}$ old time steps. With this, we can rewrite \eqref{eq:thor} as
\begin{equation}
\tilde{x}(T+1) = \tilde{x}(T) + \tilde{B} \tilde{U}(T)
\label{eq:rescaled}
\end{equation} 
where $\tilde{U}(T) = \tilde{u}(t) + \ldots + \tilde{u}(t+t_{\text{hor}}-1)$.

Now, using our reduced system written in the form of \eqref{eq:rescaled}, consider the following set-valued Lyapunov function candidate
\begin{equation}
W(\tilde{x}) := \left\{\tilde{x} + \tilde{B}\tilde{U} \in \mathcal{Y}_\infty \mid J_x(\tilde{x}+\tilde{B}\tilde{U}) \leq J_x(\tilde{x})  \right\}.
\end{equation}
First, note that the reduced system \eqref{eq:rescaled} is persistently feasible, since $\tilde{u}(t) = v^k_{ij}(t) = 0$ is always a feasible control input (this is the case where no customers are serviced). Next, we show that $W(\tilde{x})$ is upper semi-continuous. It is necessary and sufficient that the graph of $W(\tilde{x})$ be a closed set \cite[p.42]{JA-HF:90}. The graph of $W$, namely
{\small \begin{equation}
\text{graph}(W) := \left\{ (\tilde{x}, \tilde{x}+\tilde{B}\tilde{U}) \mid \tilde{x} \in \mathcal{Y}_\infty, J_x(\tilde{x}+\tilde{B}\tilde{U}) \leq J_x(\tilde{x}) \right\},
\end{equation}
}
is closed because the state space is finite, hence $W(\tilde{x})$ is upper semi-continuous.

The equilibrium point we wish to converge to corresponds to $\arg\min J_x(x(t))$ which is $\tilde{x} = \mathbf{0}$ (the state where all customers have been served). Hence, $\mathcal{X}^e = \{\mathbf{0}\}$. 
To satisfy the first condition of Theorem \ref{thm:1}, set $v^k_{ij}(\tau) = 0$ for $\tau = t,\ldots,t+t_{\text{hor}}-1$. This is the same as setting $\tilde{U} = \mathbf{0}$ and hence $\tilde{x} \in W(\tilde{x})$. For the second condition, since $x^e = \mathbf{0}$ and $J_x(x^e) = 0$, we have $W(x^e) = \{x^e\}$. Note that because of constraint \eqref{eq:constr4}, $\tilde{U} = \mathbf{0}$ when $\tilde{x} = \mathbf{0}$. 

To show that $W(\tilde{x})$ satisfies the third condition, let $z \in W(\tilde{x}(T+1))$. By definition, there exists a sequence of feasible inputs $\tilde{V}$ such that $z = \tilde{x}(T+1) + \tilde{B}\tilde{V}$ and $J_x(z) \leq J_x(\tilde{x}(T+1))$. By \eqref{eq:rescaled}, there exists $\tilde{U}$ such that $\tilde{x}(T+1) = \tilde{x}(T) + \tilde{B}\tilde{U}$. Hence, there is a feasible sequence of inputs $\tilde{U}+\tilde{V}$ such that $z = \tilde{x}(T) + \tilde{B}(\tilde{U}+\tilde{V})$, and since $J_x(z) \leq J_x(\tilde{x}(T+1)) \leq J_x(\tilde{x}(T))$, we have $W(\tilde{x}(T+1)) \subseteq W(\tilde{x}(T))$. 

Finally, let
\begin{equation}
\vspace{-1mm}
\mu(W(\tilde{x}(T))) = \min_{z \in W(\tilde{x}(T))} J_x(z).
\vspace{-1mm}
\end{equation}
Clearly, $\mu(W(\tilde{x}))$ is bounded. To show that $\mu(W(\tilde{x}(T+1))) < \mu(W(\tilde{x}(T)))$ is equivalent to showing that the number of waiting customers decreases from $T$ to $T+1$ under a sequence of feasible control actions given by the solution of Algorithm \ref{alg:mpc1}. According to \eqref{eq:dij}, $J_x$ is minimized when $\sum_{i,j \in \mathcal{N}, k \in \mathcal{V}} v^k_{ij}$ is maximized, and $\mu(W(\tilde{x}))$ will decrease as long as at least one value in $v^k_{ij}(t),\ldots, v^k_{ij}(t+t_{\text{hor}}-1)$ is nonzero. Thus, \eqref{eq:mu} will be satisfied if we find an upper bound on $t_{\text{hor}}$ that will guarantee at least one of $v^k_{ij}(t),\ldots, v^k_{ij}(t+t_{\text{hor}}-1)$ will be nonzero. The variables $v^k_{ij}(t)$ and $w^k_{ij}(t)$ are governed by \eqref{eq:uit}, which states that either $v^k_{ij}(t)$ or $w^k_{ij}(t)$ can only be nonzero if vehicle $k$ has been waiting or has just arrived at a station. The time it takes for a vehicle to arrive at a station is upper bounded by $\max_{i,j} t_{ij}$. However, the station that the vehicle arrives at may not have customers waiting. In this case, $w^k_{ij}$ is needed to send the vehicle to a station with customers, which takes an amount of time upper bounded by $\max_{i,j} t_{ij}$. Thus, if $t_{\text{hor}} \geq 2 \max_{i,j} t_{ij}$, \eqref{eq:mu} is satisfied for all $\tilde{x}(t) \neq \mathbf{0}$. Hence, by Theorem \ref{thm:1}, the MPC algorithm is asymptotically stable and $\tilde{x}(t) \rightarrow \mathbf{0}$ as $t \rightarrow \infty$, which completes the proof.
\end{proof}

\begin{proof}[Proof of Theorem \ref{thm:mpc2}]
The proof for this theorem follows the same procedure as the proof of Theorem \ref{thm:mpc1}. The main difference is the choice of the time horizon. Consider again the case where vehicle $k$ is enroute from station $j$ to station $i$. Suppose there are no customers at station $i$, so once the vehicle arrives, it must travel back to station $j$ for the next pickup. However, once the vehicle arrives at station $i$, it is completely depleted of charge and needs to charge for $\frac{\alpha_d}{\alpha_c} t_{ij}$ amount of time before departing. Once the vehicle arrives back at station $j$, it is again depleted of charge and must charge again before servicing the customer. The total time this takes consists of two traveling periods and two charging periods, with the travel time upper bounded by $\max_{i,j \in \mathcal{N}} t_{ij}$ and charging time upper bounded by $\frac{\alpha_d}{\alpha_c} \max_{i,j \in \mathcal{N}} t_{ij}$. This completes the proof.
\end{proof}

\ifimages
\section{}
This section contains the simulation results from March 19, April 9, and April 30. 
\begin{figure}[h]
\centering
\includegraphics[width=.49\textwidth]{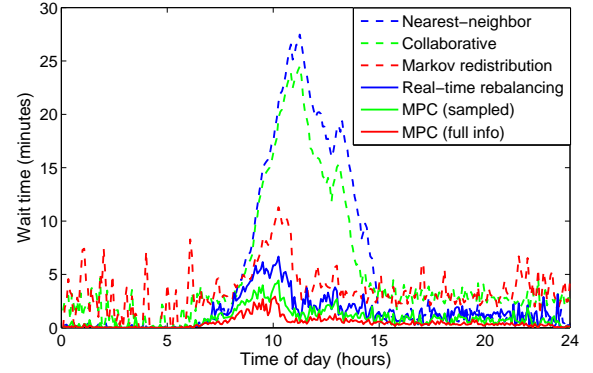} 
\caption{Average customer wait times throughout the day (March 19, 2012) for all dispatch algorithms.}
\label{fig:amodWaitTimeMar19}
\end{figure}
\begin{figure}[h]
\centering
\includegraphics[width=.49\textwidth]{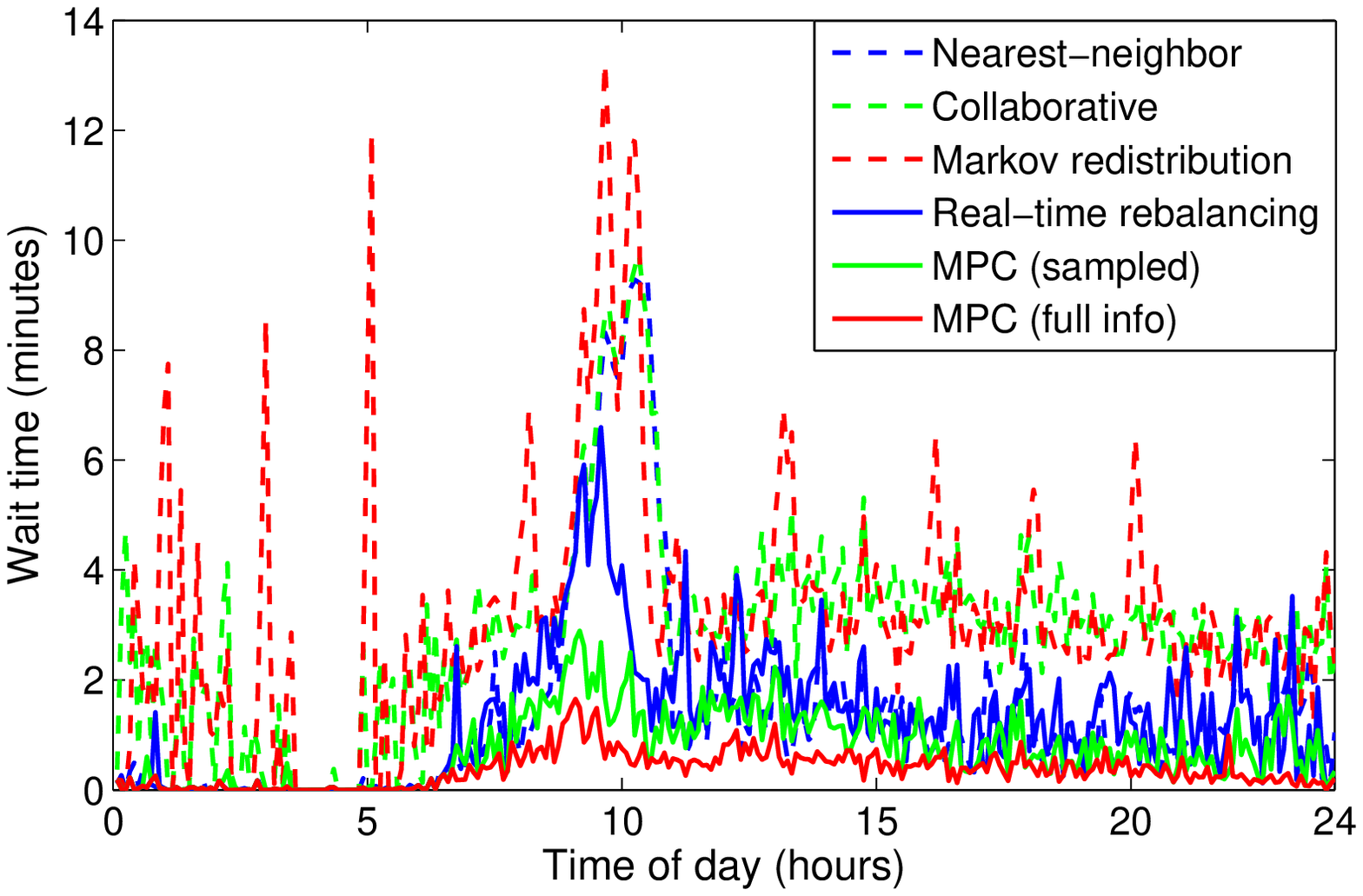} 
\caption{Average customer wait times throughout the day (April 9, 2012) for all dispatch algorithms.}
\label{fig:amodWaitTimeApr9}
\end{figure}
\begin{figure}[h]
\centering
\includegraphics[width=.49\textwidth]{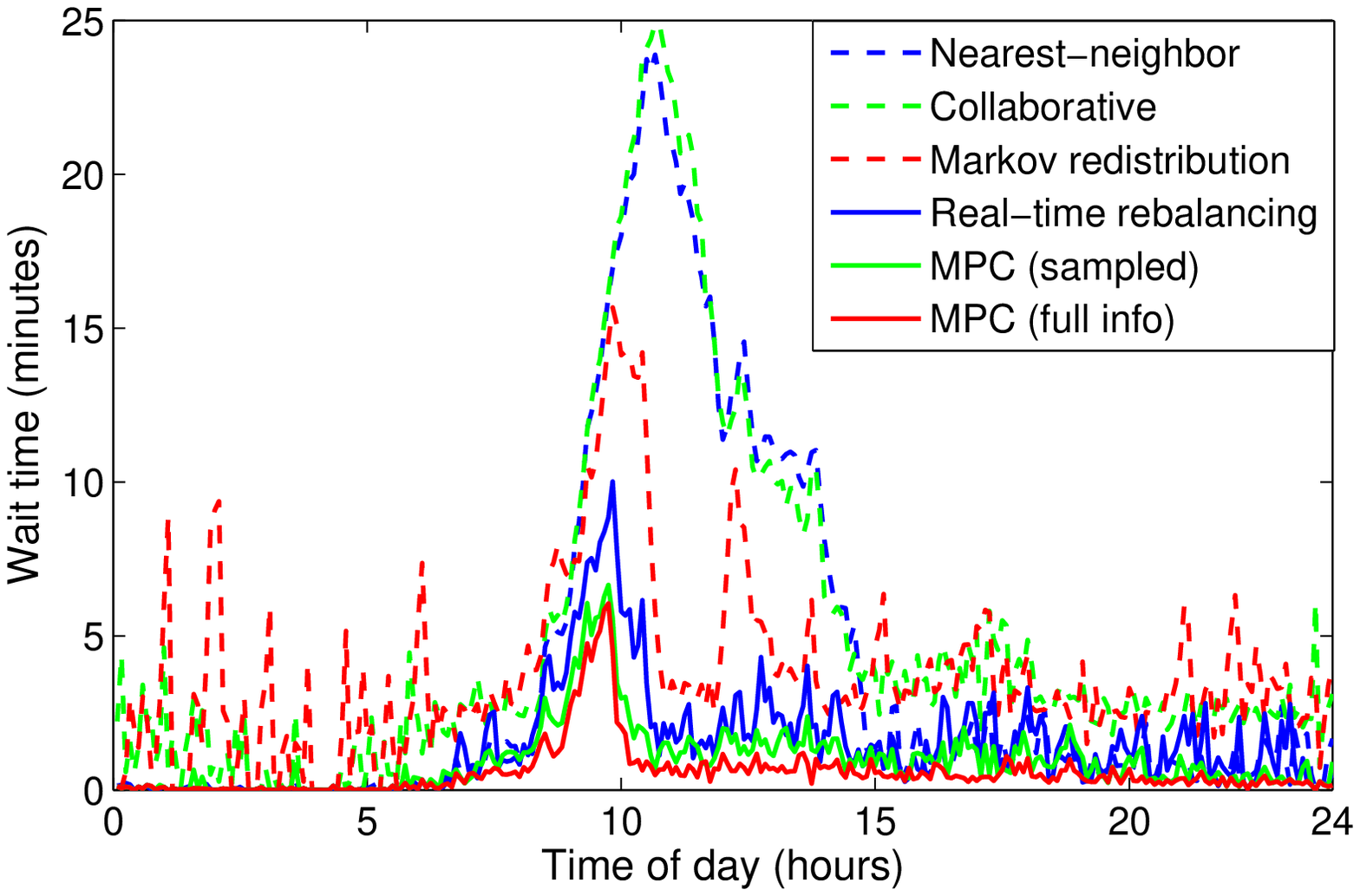} 
\caption{Average customer wait times throughout the day (April 30, 2012) for all dispatch algorithms.}
\label{fig:amodWaitTimeApr30}
\end{figure}

\fi

\ifimages
\end{appendices}
\fi

\fi

\end{document}